\newtheorem{theorem}{Theorem}
\newtheorem{definition}{Definition}
\newcommand{\cM}{\mathcal{M}}
\newcommand{\cT}{\mathcal{T}}
\newcommand{\cG}{\mathcal{G}}
\newcommand{\bX}{\mathbf{x}}
\newcommand{\bY}{\mathbf{y}}
\newcommand{\bV}{\mathbf{v}}
\newcommand{\sfH}{\mathsf{H}}
\newcommand{\sfF}{\mathsf{F}}
\newcommand{\sfX}{\mathsf{X}}
\newcommand{\poly}{\mathsf{poly}}
\newcommand{\supp}{\mathsf{supp}}
\begin{document}

\title{Improved non-adaptive algorithms for threshold group testing with a gap\footnote{This paper has appeared in part at the 2020 IEEE International Symposium on Information Theory.}\\[.7ex] 
  {\normalfont\large 
	Thach V. Bui\IEEEauthorrefmark{1}\IEEEauthorrefmark{2}, Mahdi Cheraghchi\IEEEauthorrefmark{3}, and Isao Echizen\IEEEauthorrefmark{4}\IEEEauthorrefmark{5}}\\[-1.5ex]}

%

\author{\IEEEauthorblockA{\IEEEauthorrefmark{1}University of Science, VNU-HCMC, \\Ho Chi Minh City, Vietnam\\
\IEEEauthorrefmark{2}University of Padova, Padova, Italy\\ bvthach@fit.hcmus.edu.vn}
\and
\IEEEauthorblockA{\IEEEauthorrefmark{3}University of Michigan, \\Ann Arbor MI, USA\\mahdich@umich.edu}
\and
\IEEEauthorblockA{\IEEEauthorrefmark{4}National Institute of Informatics,\\ \IEEEauthorrefmark{5}The University of Tokyo,\\ Tokyo, Japan\\iechizen@nii.ac.jp}
}

\maketitle

\thispagestyle{plain}
\pagestyle{plain}

\IEEEpeerreviewmaketitle

\begin{abstract}
The basic goal of threshold group testing is to identify up to $d$ defective items among a population of $n$ items, where $d$ is usually much smaller than $n$. The outcome of a test on a subset of items is positive if the subset has at least $u$ defective items, negative if it has up to $\ell$ defective items, where $0 \leq \ell < u$, and arbitrary otherwise. This is called threshold group testing. The parameter $g = u - \ell - 1$ is called \textit{the gap}. In this paper, we focus on the case $g > 0$, i.e., threshold group testing with a gap. Note that the results presented here are also applicable to the case $g = 0$; however, the results are not as efficient as those in related work. Currently, a few reported studies have investigated test designs and decoding algorithms for identifying defective items. Most of the previous studies have not been feasible because there are numerous constraints on their problem settings or the decoding complexities of their proposed schemes are relatively large. Therefore, it is compulsory to reduce the number of tests as well as the decoding complexity, i.e., the time for identifying the defective items, for achieving practical schemes.

The work presented here makes five contributions. The first is a more accurate theorem for a non-adaptive algorithm for threshold group testing proposed by Chen and Fu. The second is an improvement in the construction of disjunct matrices, which are the main tools for tackling (threshold) group testing and other tasks such as constructing cover-free families or learning hidden graphs. Specifically, we present a better exact upper bound on the number of tests for disjunct matrices compared with that in related work. The third and fourth contributions are a reduced exact upper bound on the number of tests and a reduced asymptotic bound on the decoding time for identifying defective items in a noisy setting on test outcomes. The fifth contribution is a simulation on the number of tests of the resulting improvements for previous work and the proposed theorems.
\end{abstract}

\begin{IEEEkeywords}
Non-adaptive threshold group testing with a gap, combinatorial mathematics, algorithms, sparse recovery.
\end{IEEEkeywords}

\section{Introduction}
\label{sec:intro}

Identification of up to $d$ defective items in a large population of $n$ items is the main objective of group testing. Defective items satisfy a specific property while negative (non-defective) items do not. Dorfman~\cite{dorfman1943detection}, an economist who served during World War II, initiated this research direction in an effort to identify syphilitic draftees among a large population of draftees. Rather than testing the draftees one by one, which would have taken much time and money, he proposed pooling the draftees into groups for testing, which is more efficient. Ideally, if there was at least one syphilitic draftee present in the group, the test outcome would be positive. Otherwise, it would be negative. This approach can be generalized by replacing ``draftee'' with ``item,'' ``syphilis'' with ``a specific property,'' and ``syphilitic draftee'' with ``defective item.'' This is classical group testing (CGT) without noise. Formally, in CGT without noise, the outcome of a test on a subset of items is positive if the subset has at least one defective item and negative otherwise. If noise is present, the outcome may flip from positive to negative and vice versa.

A generalization of CGT called \textit{threshold group testing} (TGT) was introduced by revising the definition of the test outcome~\cite{damaschke2006threshold}. In this model, the outcome of a test on a subset of items is positive if the subset has at least $u$ defective items, negative if it has up to $\ell$ defective items, where $0 \leq \ell < u$, and arbitrary otherwise. This model is denoted as $(n, d, \ell, u)$-TGT. The parameter $g = u - \ell - 1$ is called the gap. When $g = 0$, i.e., $\ell = u - 1$, threshold group testing has no gap. When $u = 1$, TGT reduces to CGT. TGT can be considered as a special case of complex group testing~\cite{chen2008upper} or generalized group testing with inhibitors~\cite{bui2018framework}. Like previous reports such as~\cite{damaschke2006threshold,cheraghchi2013improved,de2020subquadratic,chan2013stochastic,chen2009nonadaptive}, the focus of this paper is on threshold group testing with a gap, i.e., $g > 0$. Note that the results here are also applicable to the no-gap case ($g = 0$). However, this case should be treated separately to attain efficient solutions as presented in~\cite{d2013superimposed,de2020subquadratic,bui2019efficiently}.

In general, TGT is more complicated than CGT even for trivial testing since instead of testing all individuals as in CGT, all groups of a certain size (depending on the threshold parameters) have to be tested. It is intuitively obvious that the outcome of a test on a certain subset of items in TGT has less information than one in CGT. For example, if the outcome of a test on a subset of items is negative, we can be sure that there are no defectives in the subset if the test was done under the CGT setting, whereas the subset has up to $u - 1$ defectives if the test was done under the TGT setting.

We illustrate TGT for two thresholds ($u = 10$ and $\ell = 2$) versus CGT in Fig.~\ref{fig:ThresholdWGap}. The black and red dots represent negatives and defectives, respectively. A subset containing defectives and/or negatives is a blue circle containing black and/or red dots. The outcome of a test on a subset of items is positive (\textcolor{red}{$+$}) or negative ({$-$}). In CGT (``Classical'' in the figure), the outcome of a test on a subset of items is positive if the subset has at least one red dot, and negative otherwise. In TGT with two thresholds $u$ and $\ell$ (``Threshold'' in the figure), the outcome of a test on a subset of items is positive if the subset has at least $u = 10$ red dots, negative if the subset has up to $\ell = 2$ red dots, and arbitrary otherwise.

\begin{figure*}
\centering
\includegraphics[scale=0.5]{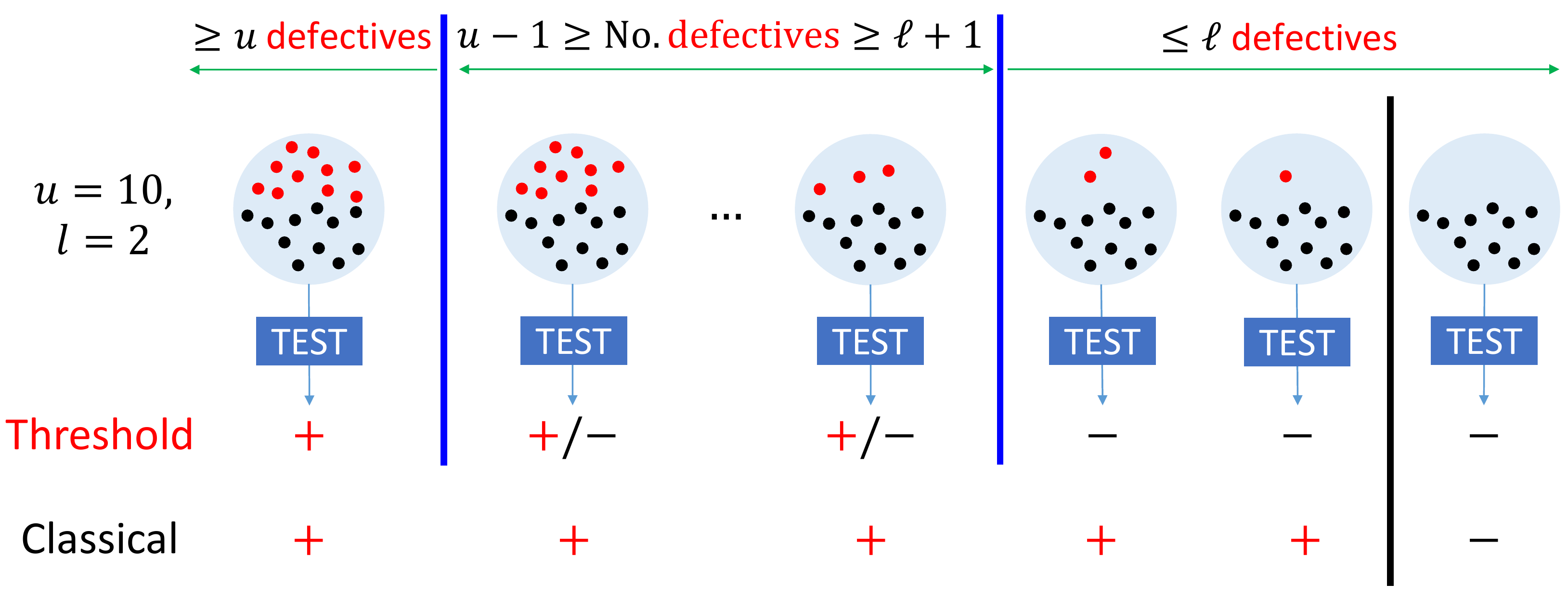}
\caption{Illustration of threshold group testing for $u = 10$ and $\ell = 2$ versus classical group testing.}
\label{fig:ThresholdWGap}
\end{figure*}

There are two approaches to designing tests. The first is \textit{adaptive group testing} (AGT) in which the design of a test depends on the designs of the previous tests. This approach usually achieves optimal bounds on the number of tests; however, it takes much time. The second is \textit{non-adaptive group testing} (NAGT) which is an alternative solution for AGT. With this approach, all tests are designed independently and can be performed in parallel. Because of the resulting time saving, NAGT has been widely applied in various fields such as computational and molecular biology~\cite{du2000combinatorial}, networking~\cite{d2019separable}, and neuroscience~\cite{bui2018framework}. Recently, group testing seems to be an efficient way to economically and quickly identify infected persons during the coronavirus pandemic of 2020--2021~\cite{shental2020efficient,gabrys2020ac}.

NAGT can be represented by a (binary) measurement matrix in which each row and each column represent a test and an item, respectively. An entry in the matrix at row $i$ and column $j$ that equals $1$ naturally means that item $j$ belongs to test $i$; and an entry that equals $0$ means otherwise. For every group testing problem, we have a few possible cases:
\begin{enumerate}
\item The ``for all'' model (the worst case): we have a single measurement matrix and the same matrix has to recover any set of up to $d$ defectives. Note that the matrix can be randomized or explicit, but once we have the matrix, the same matrix has to work correctly for any configuration of up to $d$ defectives.
\item The ``for each'' model: for every fixed set of defectives, when we sample a measurement matrix, with high probability (whp) from the measurement outcomes, we can reconstruct the defectives.
\item The ``average case'' model: the matrix can be explicit or random, and the defectives are chosen randomly (often uniform, or iid). Then, whp over all randomness involved, we should be able to recover the defectives from the measurements.
\end{enumerate}

NAGT generally refers to the ``for all'' model; otherwise, the model is specified. The focus of the work reported here is on NATGT (Non-Adaptive Threshold Group Testing), which is TGT associated with NAGT (with the ``for all'' model).

There are two main requirements for efficiently tackling group testing: minimize the number of tests and efficiently identify the set of defective items. Lengthy and intensive study of CGT has shown that the number of tests needed for effective use of AGT is $\Omega(d\ln{n})$~\cite{du2000combinatorial}, which is theoretically optimal. The decoding algorithm is usually included in the test design. For NAGT, Porat and Rothschild~\cite{porat2011explicit} first proposed explicit non-adaptive constructions using $O(d^2 \ln{n})$ tests with no efficient (sublinear to $n$) decoding algorithm. To have an efficient decoding algorithm, says $\poly(d, \ln{n})$, while keeping the number of tests as small as possible, says $O(d^{1 + o(1)} \ln^{1 + o(1)}{n})$, several schemes have been proposed~\cite{indyk2010efficiently,ngo2011efficiently,cheraghchi2013noise,bui2019efficient}. Using probabilistic methods, Cai et al.~\cite{cai2017grotesque} required only $O(d \ln{d} \cdot \ln{n})$ tests to find defective items in time $O(d(\ln{n} + \ln^2{d}))$. Recently, Bondorf et al.~\cite{bondorf2020sublinear} presented a bit mixing coding that achieves asymptotically vanishing error probability with $O(d \log{n})$ tests to identify defective items in time $O(d^2 \log{d} \cdot \log{n})$ as $n \rightarrow \infty$. For further reading, we recommend readers to refer to the survey in~\cite{aldridge2019group}.

From the genesis of TGT, Damaschke~\cite{damaschke2006threshold} showed that the set of defective items can be identified with up to $g$ false positives (i.e., negative items are identified as defective items) and $g$ false negatives (i.e., defective items are identified as negative items) by using $\binom{n}{u}$ non-adaptive tests. Chen et al.~\cite{chen2008upper} gave an upper bound on the number of tests: $t(n, d, u; z] = O \left( z \left( \frac{d + u}{u} \right)^u \left( \frac{d + u}{d} \right)^d (d + u) \ln{\frac{n}{d + u}} \right)$, where $\lfloor (z - 1)/2 \rfloor$ is usually referred to as the maximum number of errors in the test outcomes. Cheraghchi~\cite{cheraghchi2013improved} asserted that this bound is not optimal. Therefore, he reduced it to $O(d^{g+2} \ln(n/d) \cdot (8u)^u) = O(d^{g+2} \ln(n/d))$ tests under the assumption that $u$ is constant, which is asymptotically optimal. When $d = \ell + u$, Ahlswede et al.~\cite{ahlswede2011bounds} gave an upper bound on the number of tests, which is $O(u 2^{2u} \log{n})$. They also considered the case $d \neq \ell + u$; however, the bound on the number of tests has no constructive approximations for inference.

There have been a few studies on decoding algorithms for NATGT with a gap and with the ``for all'' or ``for each'' model. By using models for the gap and considering the ``for each'' model, Chan et al.~\cite{chan2013stochastic} set that the number of defective items to exactly $d$, $u = o(d)$, and used $O\left( \ln{\frac{1}{\epsilon}} \cdot d\sqrt{u} \ln{n}\right)$ tests to identify the defective items in time $O(n \ln{n} + n \ln{\frac{1}{\epsilon}})$, which is linear to the number of items, where $\epsilon \in (0, 1)$. Recently, by setting $d = O(n^\beta)$ for $\beta \in (0, 1)$ and $u = o(d)$, Reisizadeh et al. \cite{reisizadeh2018sub} use $\Theta(\sqrt{u} d \ln^3{n} )$ tests to identify all defective items in time $O(u^{1.5} d \ln^4{n} )$ whp with the aid of a $O(u \ln{n}) \times \binom{n}{u}$ look-up matrix, which is unfeasible when $n$ or $u$ is large. To the best of our knowledge, the first and only work to tackle the ``for all'' model in NATGT with a decoding algorithm is that by Chen and Fu~\cite{chen2009nonadaptive}. They proposed schemes for finding the defective items using $t(n, d - \ell, u; z]$ tests in time $O(n^u \ln{n})$. However, the decoding time becomes impractical as $n$ or $u$ increases.

We consider here the potential use of threshold group testing as a tool to tackle the problems in designing tests for detecting viral infections~\cite{dorfman1943detection,emad2014semiquantitative,gabrys2020ac} and chemical screening~\cite{damaschke2006threshold}. Damaschke~\cite{damaschke2006threshold} introduced threshold group testing with some potential applications for chemical screening, without presenting a concrete application. Back to the work of Dorfman~\cite{dorfman1943detection}, even for standard blood tests, the uncertainty in deciding the outcome of a test on a pool of blood samples remains problematic in practice. As mentioned in the first paragraph of this section, the outcome of a test on a pool of blood samples is positive if the pool contains at least one syphilitic sample and negative otherwise. However, in practice, before deciding the outcome of a test, we must get \textit{a reference value} associated with the test. A next procedure is to set a \text{threshold} such that the outcome of a test is positive if its reference value is larger than or equal to the threshold and negative otherwise. Due to the presence of impurities in blood sample pools, it is difficult to set a unique threshold. Dorfman suggested setting it to be the average of the impurities in the separate samples. This would result in three ranges for the threshold: \textit{positive}, \textit{negative}, and \textit{inconclusive}. If the threshold is in the positive (negative) range, the outcome of a test is positive (negative) if its reference value is larger (smaller) than or equal to the threshold. If the threshold is in the inconclusive range, it is uncertain to decide whether the test outcome is positive or negative. This is exactly what TGT with a gap tries to capture. In 2014, Emad and Milenkovic~\cite{emad2014semiquantitative} introduced ``semi-quantitative group testing'' (SQGT) to tackle a model for quantitative polymerase chain reaction (qPCR) tests. Since TGT is a special case of SQGT, it can also be used in qPCR tests. The work of Gabrys et al.~\cite{gabrys2020ac} motivated the application of TGT to reverse transcription PCR (RT-PCR) or quantitative PCR (qPCR) tests for viral infections such as Covid-19. The fluorescence values captured by the PCR process has different levels, and again one can assign a positive range, a negative range, and an inconclusive range in a manner similar to the work of Dorfman.

\subsection{Contributions}
\label{sub:intro:contri}

The focus of this work is TGT with a gap; i.e., $g = u - \ell - 1 > 0$. Note that the results here are also applicable to the no-gap case, i.e., $g = 0$; however, the no-gap case should be treated separately to attain efficient solutions, as explained in~\cite{d2013superimposed,de2020subquadratic,bui2019efficiently}.

The first contribution, which is summarized in Theorem~\ref{thr:corrected}, is correction of the decoding complexity analysis by Chen and Fu~\cite{chen2009nonadaptive}. Their inaccurate analysis in decoding complexity resulted in much smaller decoding complexity than the actual one.

The second contribution is a better exact upper bound on the number of tests of $(n, d, u; z]$-disjunct matrices (defined later). We significantly reduce the upper bound on the number of tests for constructing disjunct matrices compared with the work of Chen et al.~\cite{chen2008upper}. The basic idea is that instead of using a hypergraph to generate a disjunct matrix as Chen et al. did, we directly generate a random disjunct matrix. This improvement paves the way to improved results not only in group testing, but also in other fields such as graph learning~\cite{abasi2018non} and cover-free family construction~\cite{stinson2004generalized}.

The third and fourth contributions are a reduced exact upper bound on the number of tests and a reduced asymptotic bound on the decoding time for identifying defective items in a noisy setting on test outcomes compared with the state-of-the-art work of Chen and Fu~\cite{chen2009nonadaptive}. The number of tests is directly reduced by using a better upper bound on the number of tests (the second contribution). The basic idea for reducing decoding time is to pick subsets of potential defectives such that each subset contains at least $\ell + 1$ defectives and then return the union of these subsets as an approximate defective set. To attain a better approximate defective set (at the cost of a longer decoding time), the approximate defective set derived as described above is taken as the input to the existing algorithm in~\cite{chen2009nonadaptive}.

Suppose there are up to $\lfloor (z - 1)/2 \rfloor$ erroneous outcomes. Let $S^\prime$ be the approximate defective set returned by decoding procedure. Two sets $S \setminus S^\prime$ and $S^\prime \setminus S$ are referred to as the sets of false negatives and false positives, respectively. Chen and Fu~\cite{chen2009nonadaptive} use $t(n, d - \ell, u; z] = O \left( z \left( \frac{k}{u} \right)^u \left( \frac{k}{d - \ell} \right)^{d - \ell} k \ln{\frac{n}{k}} \right)$ tests to recover a set $S^\prime$ with $|S^\prime \setminus S| \leq g$ and $|S \setminus S^\prime| \leq g$, where $k = d - \ell + u$. By using $h(n, d - \ell, u; z] = O \left( \left( 1 + \frac{z}{\alpha} \right) \cdot \left( \frac{k}{u} \right)^u \left( \frac{k}{d - \ell} \right)^{d - \ell} k \ln{\frac{n}{k}} \right)$ tests where $k = d - \ell + u$ and $\alpha = k \ln{\frac{\mathrm{e} n}{k}} + u \ln{\frac{\mathrm{e} k}{u}}$, we can recover a set $S^\prime$ close to the true defective set $S$ as follows:

\begin{enumerate}
\item $|S^\prime \setminus S| \leq g$ and $|S \setminus S^\prime| \leq g$.
\item $|S^\prime \setminus S| \leq g w$ and $|S \setminus S^\prime| \leq g$, where $w = \left( \left\lfloor \frac{|S|}{\ell + 1} \right\rfloor + u - 1 \right)g$.
\item $|S^\prime \setminus S| \leq g$ and $|S \setminus S^\prime| \leq 2g$.
\end{enumerate}

The decoding complexities of these three cases are always smaller than the one (after correction) proposed by Chen and Fu~\cite{chen2009nonadaptive}. 

The last contribution is a simulation for previous work and our proposed theorems. The results demonstrate the superiority of our proposed theorems over previous ones and validate the arguments presented here.

The contributions are summarized in Theorems~\ref{thr:corrected},~\ref{thr:improvedChenUpper},~\ref{thr:improved0},~\ref{thr:mainImprovement}, and~\ref{thr:mainImprovement2} and illustrated in Fig.~\ref{fig:Overview} (except Theorem~\ref{thr:corrected}). The ovals, lines, parallelograms, and rectangles represent start or end point, connectors showing relationships between the representative shapes, inputs or outputs, and processes, respectively. The dash-dot line represents a comment on the representative shapes. The blue arrows represent the previous schemes while the other arrows represent our proposed theorems.

\begin{figure*}[t!]
\centering
\includegraphics[scale=0.5]{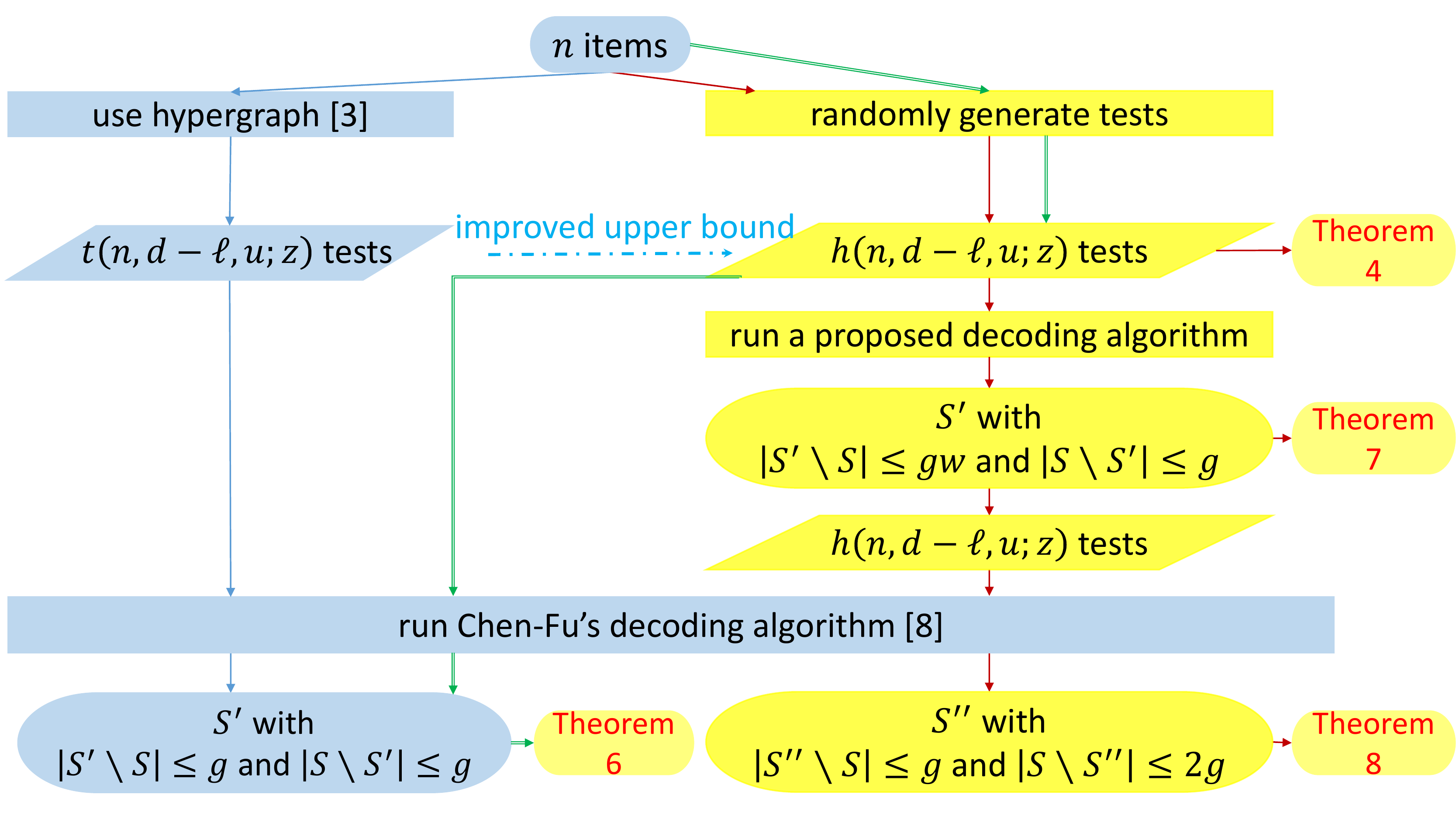}

\caption{Flow chart illustrating how contributions were attained in this work (excluding Theorem~\ref{thr:corrected}). Flow is from top to bottom. Each output can be reached by following consistent arrow color. To avoid misunderstanding, $S^{\prime \prime}$ is used instead of $S^\prime$ for Theorem~\ref{thr:mainImprovement2}. Both notations represent approximate defective sets recovered after running decoding algorithms. Set $S$ is the true defective set. Parameters $t(n, d - \ell, u; z], h(n, d - \ell, u; z]$, and $w$ are defined in Table~\ref{buice.t1}.}
\label{fig:Overview}
\end{figure*}

\subsection{Comparison}
\label{sub:cmp}

The one proposed theorem for the number of tests and three proposed non-adaptive algorithms are compared with previous ones in Table~\ref{buice.t1}. Our proposed algorithms are error-tolerant and their decoding algorithms are deterministic. Note that Ahlswede et al.~\cite{ahlswede2011bounds} also considered the case $d \neq \ell + u$; however, the bound on the number of tests has no constructive approximations for easy inference. Therefore, we do not include that bound in Table~\ref{buice.t1} for easy comparison.

\begin{table*}[t]

\begin{center}
\scalebox{.65}{
\begin{tabular}{|c|c|c|c|c|c|c|c|c|c|}
\hline
Scheme & \begin{tabular}{@{}c@{}} No. of \\ defectives \end{tabular} & \begin{tabular}{@{}c@{}} Thresholds \end{tabular} & \begin{tabular}{@{}c@{}} No. of \\items ($n$) \end{tabular} & \begin{tabular}{@{}c@{}} Model on \\ gap interval \end{tabular} & \begin{tabular}{@{}c@{}} Error \\tolerance \end{tabular} & \begin{tabular}{@{}c@{}} Number of tests \\ $t$ \end{tabular} & \begin{tabular}{@{}c@{}} Decoding time \\ (Decoding complexity) \end{tabular} & \begin{tabular}{@{}c@{}} Defective \\ set \\ recovered \end{tabular} & \begin{tabular}{@{}c@{}} Decoding \\ type \end{tabular} \\
\hline
\begin{tabular}{@{}c@{}} Ahlswede et al.~\cite{ahlswede2011bounds} \end{tabular} & $d = \ell + u$ & \begin{tabular}{@{}c@{}} $\ell < u \leq d$ \end{tabular} & $\geq d$ & No & $\times$ & \begin{tabular}{@{}c@{}} $O(u 2^{2u} \log{n})$ \end{tabular} & $\times$ & $\times$ & $\times$ \\
\hline
\begin{tabular}{@{}c@{}} Chen et al.~\cite{chen2008upper} \end{tabular} & $\leq d$ & \begin{tabular}{@{}c@{}} $\ell < u \leq d$ \end{tabular} & $\geq d$ & No & $z$ & \begin{tabular}{@{}c@{}} $t(n, d - \ell, u; z] = $ \\ $O \left( z \left( \frac{k}{u} \right)^u \left( \frac{k}{d - \ell} \right)^{d - \ell} k \ln{\frac{n}{k}} \right)$ \end{tabular} & $\times$ & $\times$ & $\times$ \\
\hline
\begin{tabular}{@{}c@{}} Cheraghchi~\cite{cheraghchi2013improved} \end{tabular} & $\leq d$ & \begin{tabular}{@{}c@{}} $\ell < u \leq d$ \end{tabular} & $\geq d$ & No & $O \left( \frac{p d^2 \log{\frac{n}{d}} }{(1 - p)^2} \right)$ & $O \left( \frac{d^{g + 2} \ln{\frac{n}{d}}}{(1-p)^2} \cdot (8u)^u \right)$ & $\times$ & $\times$  & $\times$ \\
\hline
\begin{tabular}{@{}c@{}} \textbf{Proposed 0} \\ (\textbf{Theorem~\ref{thr:improvedChenUpper}}) \end{tabular} & $\leq d$ & \begin{tabular}{@{}c@{}} $\ell < u \leq d$ \end{tabular} & $\geq \frac{(d + u)^2}{u}$ & \textbf{No} & $z$ & \begin{tabular}{@{}c@{}} $h(n, d - \ell, u; z]$ \\ $= O \left( \left( 1 + \frac{z}{\alpha} \right) \cdot \right.$ \\ $\left. \left( \frac{k}{u} \right)^u \left( \frac{k}{d - \ell} \right)^{d - \ell} k \ln{\frac{n}{k}} \right)$ \end{tabular} & $\times$ & $\times$ & $\times$ \\
\hline
\begin{tabular}{@{}c@{}} Chan et al.~\cite{chan2013stochastic} \end{tabular} & $d = o(n)$ & \begin{tabular}{@{}c@{}} $\ell < u = o(d)$ \end{tabular} & $\omega(d)$ & \begin{tabular}{@{}c@{}} Bernoulli \\ Linear \end{tabular} & $\times$ & \begin{tabular}{@{}c@{}} $O\left( \ln{\frac{1}{\epsilon}} \cdot d\sqrt{\ell} \ln{n} \right)$ \\ $O(g^2 d \ln{n} + d\ln{\frac{1}{\epsilon}})$  \end{tabular} & \begin{tabular}{@{}c@{}} $O(n\ln{n} + n \ln{\frac{1}{\epsilon}})$ \\ $O(g^2 n \ln{n} + n \ln{\frac{1}{\epsilon}})$ \end{tabular} & $S^\prime \equiv S$ & Rnd. \\
\hline
\begin{tabular}{@{}c@{}} Reisizadeh et al.~\cite{reisizadeh2018sub} \end{tabular} & \begin{tabular}{@{}c@{}} $d = O(n^\beta)$ \\ for \\$0 < \beta < 1$ \end{tabular} & \begin{tabular}{@{}c@{}} $\ell < u = o(d)$ \end{tabular} & $O(d^{1/\beta})$ & Bernoulli & $\times$ & $\Theta(\sqrt{u} d \ln^3{n} )$ & \begin{tabular}{@{}c@{}} $O(u^{1.5} d \ln^4{n} )$ \\ with a $O(u \ln{n}) \times \binom{n}{u}$ \\ look-up matrix \end{tabular} & $S^\prime \equiv S$ & Rnd. \\
\hline
\begin{tabular}{@{}c@{}} Chen and Fu~\cite{chen2009nonadaptive} \\ (\textbf{more accurate} in Theorem~\ref{thr:corrected}) \end{tabular} & $\leq d$ & \begin{tabular}{@{}c@{}} $\ell < u \leq d$ \end{tabular} & $\geq \frac{(d + u)^2}{u}$ & No & $z$ & $t(n, d - \ell, u; z]$ & \begin{tabular}{@{}c@{}} $O \left( t(n, d - \ell, u; z] \times u \left( \binom{n}{u} \right. \right.$ \\ $\left. \left. + (d - u) \binom{n - u}{g + 1} \binom{d - 1}{g} \binom{d}{u} \right) \right)$ \end{tabular} & \begin{tabular}{@{}c@{}} $|S^\prime \setminus S| \leq g$ \\ $|S \setminus S^\prime| \leq g$ \end{tabular} & Det. \\
\hline
\begin{tabular}{@{}c@{}} \textbf{Proposed 1} \\ (\textbf{Theorem~\ref{thr:improved0}}) \end{tabular} & $\leq d$ & $\ell < u \leq d$ & $\geq \frac{(d + u)^2}{u}$ & \textbf{No} & $z$ & \begin{tabular}{@{}c@{}} $h(n, d - \ell, u; z]$ \end{tabular} & \begin{tabular}{@{}c@{}} $O\left( h(n, d - \ell, u; z] \times u \left( \binom{n}{u} \right. \right.$ \\ $\left. \left. + (d - u) \binom{n - u}{g + 1} \binom{d - 1}{g} \binom{d}{u} \right) \right)$ \end{tabular} & \begin{tabular}{@{}c@{}} $|S^\prime \setminus S| \leq g$ \\ $|S \setminus S^\prime| \leq g$ \end{tabular} & \textbf{Det.} \\
\hline
\begin{tabular}{@{}c@{}} \textbf{Proposed 2} \\ (\textbf{Theorem~\ref{thr:mainImprovement}}) \end{tabular} & $\leq d$ & $\ell < u < d$ & $\geq \frac{\mathrm{e}^2(d + u)^2}{u}$ & \textbf{No} & $z$ & \begin{tabular}{@{}c@{}} $h(n, d - \ell, u; z]$ \end{tabular} & \begin{tabular}{@{}c@{}} $O \left( h(n, d - \ell, u; z] \cdot u\binom{n}{u} \right)$ \end{tabular} & \begin{tabular}{@{}c@{}} $|S^\prime \setminus S| \leq g w$ \\ $|S \setminus S^\prime| \leq g$ \end{tabular} & \textbf{Det.} \\
\hline
\begin{tabular}{@{}c@{}} \textbf{Proposed 3} \\ (\textbf{Theorem~\ref{thr:mainImprovement2}}) \end{tabular} & $\leq d$ & $\ell < u < d$ & $\geq \frac{\mathrm{e}^2 (d + u)^2}{u}$ & \textbf{No} & $z$ & \begin{tabular}{@{}c@{}} $h(n, d - \ell, u; z]$ \end{tabular} & \begin{tabular}{@{}c@{}} $O \left( h(n, d - \ell, u; z] \cdot u \cdot \left( \binom{n}{u} \right. \right.$ \\ $\left. \left. + (d - u) \binom{w + d - u}{g + 1} \binom{d - 1}{g} \binom{d}{u} \right) \right)$ \end{tabular} & \begin{tabular}{@{}c@{}} $|S^\prime \setminus S| \leq g$ \\ $|S \setminus S^\prime| \leq 2g$ \end{tabular} & \textbf{Det.} \\
\hline

\end{tabular}}

\end{center}

\caption{Comparison of proposed theorems with previous ones. A $\times$ symbol means that the criterion does not hold for that scheme. The terms ``Randomized'' and ``Deterministic'' are abbreviated to ``Rnd.'' and ``Det.''. Sets $S^\prime$ and $S$ are the recovered defective set and true defective set, respectively. We define $k = d - \ell + u$, $\alpha = k \ln{\frac{\mathrm{e} n}{k}} + u \ln{\frac{\mathrm{e} k}{u}}$, $w = \left( \left\lfloor |S|/(\ell + 1) \right\rfloor + u - 1 \right) g$, and $0 \leq p < 1$. Parameters $t(n, d - \ell, u; z]$ and $h(n, d - \ell, u; z]$ are defined in rows 2 and 4 as well as in~\eqref{eqn:ChenUpper} and~\eqref{eqn:improvedChenUpper}, respectively.}

\label{buice.t1}
\end{table*}

\subsubsection{Number of tests} When there are no models for the gap $g$, the upper bound on the number of tests with our proposed theorems is smaller than with the ones proposed by Chen and Fu~\cite{chen2009nonadaptive} and Chen et al.~\cite{chen2008upper}. Note that the upper bounds on the number of tests with Chen and Fu's scheme and Chen et al.'s scheme are equal, and so are our proposed theorems. The number of tests $O \left( \frac{d^{g + 2} \ln{\frac{n}{d}}}{(1-p)^2} \cdot (8u)^u \right)$ with the scheme proposed by Cheraghchi~\cite{cheraghchi2013improved} can be reduced to $O \left( \frac{d^{g + 2} \ln{\frac{n}{d}}}{(1-p)^2} \right)$ as $u$ is a constant; i.e., the multiplicity $(8u)^u$ can be removed because it is constant. It is essentially the optimal asymptotic number of tests. However, Cheraghchi~\cite{cheraghchi2013improved} does not focus on the finite length regime and refining the bounds for that as well as the algorithmic recovery problem. When $d = \ell + u$, a similar number of tests, which is $O(u 2^{2u} \log{n})$, is attained by Ahlswede et al.~\cite{ahlswede2011bounds}. The big $O$ notation is not useful in practice for this case because this multiplicity is extremely large and should not be removed. For example, we have $(8u)^u = 2^{20} = 1,048,576$ when $u = 4$ and $(8u)^u \geq 102,400,000$ when $u \geq 5$. Therefore, in terms of asymptotics, the number of tests with the scheme proposed by Cheraghchi is good as $u$ is constant, although it is extremely large in practice.

The number of tests could be significantly reduced by setting more conditions on $g, u$, and $d$, but such conditions would likely make any proposed scheme impractical. Moreover, the previous schemes that followed this approach do not take into account erroneous outcomes. When the Bernoulli model is applied to the gap, i.e., the number of defectives in a test is between the thresholds, the outcome is positive/negative with probability $0.5$. Setting $u = o(d)$ and error precision $\epsilon > 0$, Chan et al.~\cite{chan2013stochastic} achieved a small number of tests $O\left( \ln(1/\epsilon) \cdot d\sqrt{\ell} \ln{n} \right)$ while Reisizadeh et al.~\cite{reisizadeh2018sub} attained $\Theta(\sqrt{u} d \ln^3{n} )$ tests. When a linear model is applied to the gap, i.e., the number of defectives in a test is between the thresholds, the probability of a positive outcome linearly increases with the number of defectives. The number of tests with a linear model is $O(g^2 n \ln{n} + n \ln(1/\epsilon))$~\cite{chan2013stochastic}.

Once $g = 0$, D'yachkov et al.~\cite{d2013superimposed} and Cheraghchi~\cite{cheraghchi2013improved} show that it is possible to obtain an optimal bound on the number of tests, i.e., $O \left( d^2 \ln{n} \right)$ tests, when $u$ is a constant. Since the objective of this work is to consider the case $g > 0$, we recommend readers, who are interested in the case $g = 0$, to~\cite{bui2019efficiently} for further reading.

\subsubsection{Decoding time} Let $S^\prime$ and $S$ be the recovered defective set and the true defective set. For threshold group testing with gap $g$, $S^\prime$ and $S$ are indistinguishable if $|S^\prime \setminus S| \leq g$ and $|S \setminus S^\prime| \leq g$. Nevertheless, if a model is applied to the gap, $S^\prime \equiv S$ can be attained with some probability. With this approach, the fastest decoding was at with the scheme of Reisizadeh et al.~\cite{reisizadeh2018sub}: $O(u^{1.5} d \ln^4{n} )$. However, this scheme is based on the assumption that $\ell < u = o(d)$, that the Bernoulli model is applied to the gap, and that an auxiliary look-up matrix of size $O(u \ln{n}) \times \binom{n}{u}$ is stored somewhere. The need for a look-up matrix makes this scheme an impractical solution. For example, if $n = 10^6$ and $u = 5$, the number of columns in the look-up matrix is more than 8.3 octillion ($8.3 \times 10^{27}$). Moreover, $n$ and $u$ are more likely larger in practice. The scheme of Chan et al.~\cite{chan2013stochastic} attains a near-optimal decoding time: $O\left( \ln{\frac{1}{\epsilon}} \cdot d\sqrt{\ell} \ln{n} \right)$ or $O(g^2 d \ln{n} + d\ln{\frac{1}{\epsilon}})$ for $\epsilon > 0$. However, this decoding time is attained only under certain constraints: the Bernoulli or a linear model is applied to the gap, $n$ and $d = o(n)$ are large enough, and $\ell = o(d)$. This scheme is thus also likely impractical.

The conditions on the gap and on $n, \ell, u$, and $d$ make the schemes proposed by Chan et al.~\cite{chan2013stochastic} and Reisizadeh et al.~\cite{reisizadeh2018sub} impractical. Like Chen and Fu~\cite{chen2009nonadaptive}, we consider the case in which there are no constraints on the gap and $\ell < u \leq d < n$. Our decoding algorithms are deterministic. With the goal of attaining $|S^\prime \setminus S| \leq g$ and $|S \setminus S^\prime| \leq g$, the number of tests and the decoding time with our proposed algorithms (summarized in Theorems~\ref{thr:improved0},~\ref{thr:mainImprovement},~\ref{thr:mainImprovement2}) are much lower than the one proposed by Chen and Fu~\cite{chen2009nonadaptive} (summarized in Theorem~\ref{thr:corrected}). 

There are two terms in the decoding complexity of Theorem~\ref{thr:improved0} (in Proposed 1): $\binom{n}{u}$ and $(d - u) \binom{n - u}{g + 1} \binom{d - 1}{g} \binom{d}{u}$. To remove the second term, we relax the condition on $|S^\prime \setminus S|$ from $|S^\prime \setminus S| \leq g$ to $|S^\prime \setminus S| \leq wg$, where $w = \left( \left\lfloor \frac{|S|}{\ell + 1} \right\rfloor + u - 1 \right) g$. This reduces the decoding complexity of Theorem~\ref{thr:improved0} to $O\left( h(n, d - \ell, u; z] \times u \binom{n}{u} \right)$, which is significantly less than the original one in Theorem~\ref{thr:improved0}. This result is summarized in Theorem~\ref{thr:mainImprovement} (in Proposed 2).

However, it is clear that the condition $|S^\prime \setminus S| \leq wg$ in Theorem~\ref{thr:mainImprovement} is not as tight as the condition $|S^\prime \setminus S| \leq g$ in Theorem~\ref{thr:improved0}. To remedy this drawback, we derived Theorem~\ref{thr:mainImprovement2} (in Proposed 3), which slightly increases the decoding complexity while attaining the conditions $|S^\prime \setminus S| \leq 2g$ and $|S \setminus S^\prime| \leq g$.

\section{Preliminaries}
\label{sec:pre}

\subsection{Notations}
\label{sub:pre:notation}

For consistency, we use capital calligraphic letters for matrices, non-capital letters for scalars, bold letters for vectors, and capital letters for sets. All matrix and vector entries are binary. The frequently used notations are listed in Table~\ref{buice.t2}.

\begin{table}[t]

\begin{center}
\scalebox{1.0}{
\begin{tabular}{|c|l|}
\hline
\begin{tabular}{@{}c@{}} Notation \end{tabular} & \begin{tabular}{@{}c@{}} Description \end{tabular} \\
\hline
$n$ & Number of items \\
\hline
$d$ & Maximum number of defective items \\
\hline
$\bX = (x_1, \ldots, x_n)^T$ & Binary representation of $n$ items \\
\hline
$\ell$ & \begin{tabular}{@{}l@{}} Lower bound in \\ non-adaptive $(n, d, \ell, u)$-TGT model \end{tabular}  \\
\hline
$u$ & \begin{tabular}{@{}l@{}} Upper bound in \\ non-adaptive $(n, d, \ell, u)$-TGT model \end{tabular}  \\
\hline
$g = u - \ell - 1$ & Gap between $\ell$ and $u$ \\
\hline
$S = \{j_1, j_2, \ldots, j_{|S|} \}$ & \begin{tabular}{@{}l@{}} Set of defective items; \\ cardinality of $S$ is $|S| \leq d$ \end{tabular}  \\
\hline
$N = [n] = \{1, \ldots, n \}$ & Set of $n$ items \\
\hline
$\otimes_{\ell, u}$ & \begin{tabular}{@{}l@{}} Operation related to non-adaptive \\ $(n, d, \ell, u)$-TGT (to be defined later) \end{tabular}  \\
\hline
$\cT_{i, *}$ & Row $i$ of matrix $\cT$ \\
\hline
$\cT_{*, j}$ & Column $j$ of matrix $\cT$ \\
\hline
$\cM_{i,*}$ & Row $i$ of matrix $\cM$ \\
\hline
$\cM_{*,j}$ & Column $j$ of matrix $\cM$ \\
\hline
\end{tabular}}

\end{center}

\caption{Notations frequently used in this paper.}

\label{buice.t2}
\end{table}

\subsection{Problem definition}
\label{sub:pre:probDef}
We index the population of $n$ items from 1 to $n$. Let $[n] = \{1, 2, \ldots, n \}$ and $S$ be the defective set, where $|S| \leq d$. A test is defined by a subset of items $P \subseteq [n]$. A pool with a negative (positive) outcome is called a negative (positive) pool. The outcome of a test on a subset of items is positive if the subset contains at least $u$ defective items, is negative if the subset contains up to $\ell$ defective items, and arbitrary otherwise. Formally, the test outcome is positive if $|P \cap S| \geq u$, negative if $|P \cap S| \leq \ell$, and arbitrary if $\ell < |P \cap S| < u$. This model is denoted as $(n, d, \ell, u)$-TGT. In addition, $g = u - \ell - 1$ is the gap.

We can model non-adaptive $(n, d, \ell, u)$-TGT as follows. A $t \times n$ binary matrix $\cT =(t_{ij})$ is defined as a measurement matrix, where $n$ is the number of items and $t$ is the number of tests. Vector $\bX = (x_1,\ldots,x_n)^T$ is the binary representation vector of $n$ items, where $|\bX| = \sum_{i = 1}^n x_i \leq d$. An entry $x_j=1$ indicates that item $j$ is defective, and $x_j=0$ indicates otherwise. The $j$th item corresponds to the $j$th column of the matrix. An entry $t_{ij}=1$ naturally means that item $j$ belongs to test $i$, and $t_{ij}=0$ means otherwise. The outcome of all tests is $\bY=(y_1, \ldots, y_t)^T$, where $y_i=1$ if test $i$ is positive and $y_i=0$ otherwise. The procedure used to get outcome vector $\bY$ is called \textit{encoding}. The procedure used to identify defective items from $\bY$ is called \textit{decoding}. Outcome vector $\bY$ is given by

\begin{equation}
\label{eqn:thresholdGT}
\bY = \cT \otimes_{\ell, u} \bX = \begin{bmatrix}
\cT_{1, *} \otimes_{\ell, u} \bX \\
\vdots \\
\cT_{t, *} \otimes_{\ell, u} \bX
\end{bmatrix} = \begin{bmatrix}
y_1 \\
\vdots \\
y_t
\end{bmatrix},
\end{equation}
where $\otimes_{\ell, u}$ is a notation for the test operation in non-adaptive $(n, d, \ell, u)$-TGT; namely, $y_i = \cT_{i, *} \otimes_{\ell, u} \bX = 1$ if $\sum_{j=1}^n x_j t_{ij} \geq u$, $y_i = \cT_{i, *} \otimes_{\ell, u} \bX = 0$ if $\sum_{j=1}^n x_j t_{ij} \leq \ell$, and $y_i = \cT_{i, *} \otimes_{\ell, u} \bX = \{0, 1\}$ if $\ell < \sum_{j=1}^n x_j t_{ij} < u$, for $i=1, \ldots, t$.

Our objective is to find an efficient encoding and decoding scheme with non-adaptive approach to identify up to $d$ defective items in non-adaptive $(n, d, \ell, u)$-TGT. Precisely, our task is to minimize the number of rows in matrix $\cT$ and the time for recovering $\bX$ from $\bY$ by using $\cT$.

\subsection{Disjunct matrices}
\label{sub:pre:disjunct}

Disjunct matrices are a powerful tool to tackle the threshold group testing problem~\cite{chen2009nonadaptive,cheraghchi2013improved,bui2019efficiently}. They were first introduced by Kautz and Singleton~\cite{kautz1964nonrandom} as \textit{superimposed codes} and then generalized by Stinson and Wei~\cite{stinson2004generalized} and D'yachkov et al.~\cite{d2002families}. The support set for vector $\bV = (v_1, \ldots, v_w)$ is $\supp(\bV) = \{j \mid v_j \neq 0 \}$. The formal definition of a disjunct matrix is as follows.

\begin{definition}
An $m \times n$ binary matrix $\cM$ is called an $(n, d, r; z]$-disjunct matrix if, for any two disjoint subsets $S_1, S_2 \subset [n]$ such that $|S_1| = d$ and $|S_2| = r$, there exists at least $z$ rows in which there are all 1's among the columns in $S_2$ while all the columns in $S_1$ have 0's, i.e., $\left\vert \bigcap_{j \in S_2} \supp \left( \cM_{*, j} \right) \big\backslash \bigcup_{j \in S_1} \supp \left( \cM_{*, j} \right) \right\vert \geq z$. Parameter $\left\lfloor (z-1)/2 \right\rfloor$ is usually referred to as the \textit{error tolerance}.
\label{def:threshDisjunct}
\end{definition}

Matrix $\cM$ can be illustrated as follows.

\begin{align}
\cM = 
\left[
\begin{array}{c}
\ldots\\
\ldots \\
\ldots \\
\ldots \\
\ldots \\
\ldots \\
\end{array} \right.
\overbrace{
\begin{array}{cc}
\ldots & \ldots \\
1 & 1 \\
\ldots & \ldots \\
1 & 1 \\
\ldots & \ldots \\
\ldots & \ldots \\
\end{array}}^{r}
\begin{array}{c}
\ldots \\
\ldots \\
\ldots \\
\ldots \\
\ldots \\
\ldots \\
\end{array}
\overbrace{
\begin{array}{cc}
\ldots & \ldots \\
0 & 0 \\
\ldots & \ldots \\
0 & 0 \\
\ldots & \ldots \\
\ldots & \ldots \\
\end{array}}^{d}
\left.
\begin{array}{c}
\ldots \\
\ldots \\
\ldots \\
\ldots \\
\ldots \\
\ldots \\
\end{array}
\right]
\begin{array}{c}
\\
\begin{tabular}{@{}l@{}}the $1$st \\specific row \end{tabular} \\
\\
\begin{tabular}{@{}l@{}}the $z$th \\specific row \end{tabular} \\
\\
\\
\end{array}
\nonumber
\end{align}

Chen et al.~\cite{chen2008upper} gave an upper bound on the number of rows for $(n, d, u; z]$-disjunct matrices as follows.

\begin{theorem}~\cite[Theorem 3.2]{chen2008upper}
\label{thr:ChenUpper}
For any positive integers $d, u, z$, and $n$ with $k = d + u \leq n$, there exists a $t \times n$ $(n, d, u; z]$-disjunct matrix with
\begin{align}
t(n, d, u; z] &= z \left( \frac{k}{u} \right)^u \left( \frac{k}{d} \right)^d \left[ 1 + k \left(1 + \ln{\left( \frac{n}{k} + 1 \right)} \right) \right] \nonumber \\
&= O \left( z \left( \frac{k}{u} \right)^u \left( \frac{k}{d} \right)^d k \ln{\frac{n}{k}} \right) \nonumber \\
&= O(z \cdot t(n, d, u; 1]). \label{eqn:ChenUpper}
\end{align}
\end{theorem}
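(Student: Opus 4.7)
The plan is to use the probabilistic method with a row-addition (deletion) trick for $z = 1$, then boost to general $z$ by concatenation. The concatenation step is immediate: if $\cM^{(1)}, \dots, \cM^{(z)}$ are each $(n, d, u; 1]$-disjunct, then their vertical concatenation is $(n, d, u; z]$-disjunct, because each individual matrix contributes at least one good row for every disjoint pair $(S_1, S_2)$. Hence it suffices to exhibit an $(n, d, u; 1]$-disjunct matrix with at most $(k/u)^u (k/d)^d [1 + k(1 + \ln(n/k + 1))]$ rows.

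For $z = 1$, I would sample a random $t \times n$ matrix with i.i.d.\ Bernoulli$(p)$ entries, where $p = u/k$. For any fixed disjoint pair $(S_1, S_2)$ with $|S_1| = d$ and $|S_2| = u$, a single row is \emph{good} (all $1$'s on $S_2$ and all $0$'s on $S_1$) with probability $q := p^u (1 - p)^d = (u/k)^u (d/k)^d$, and the choice $p = u/k$ maximizes $q$. Let $N_p := \binom{n}{d} \binom{n - d}{u} = \binom{n}{k} \binom{k}{u}$ count the disjoint pairs. The expected number of \emph{bad} pairs (those with no good row among the $t$ random rows) is at most $N_p (1 - q)^t \leq N_p e^{-tq}$. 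For every bad pair encountered, I append one corrective row with $1$'s exactly on $S_2$ and $0$'s elsewhere; such a row is automatically good for that pair, so the augmented matrix is $(n, d, u; 1]$-disjunct. Averaging, there exists a matrix of at most $t + N_p e^{-tq}$ rows; minimizing over $t$ at $t^\star = (1/q)\ln(q N_p)$ yields the total row count $(1/q)[1 + \ln(q N_p)]$.

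It remains to bound $\ln(q N_p) = \ln q + \ln \binom{n}{k} + \ln \binom{k}{u}$ by $k(1 + \ln(n/k + 1))$. The key identity is $1 = (u/k + d/k)^k = \sum_{i=0}^k \binom{k}{i} (u/k)^i (d/k)^{k - i}$, which forces $\binom{k}{u} (u/k)^u (d/k)^d \leq 1$, i.e.\ $\ln \binom{k}{u} + \ln q \leq 0$. Combined with the textbook estimate $\ln \binom{n}{k} \leq k \ln(e n / k) = k(1 + \ln(n/k)) \leq k(1 + \ln(n/k + 1))$, this gives $\ln(q N_p) \leq k(1 + \ln(n/k + 1))$, so the resulting matrix uses at most $(1/q)[1 + k(1 + \ln(n/k + 1))] = (k/u)^u (k/d)^d [1 + k(1 + \ln(n/k + 1))]$ rows, matching $t(n, d, u; 1]$ exactly. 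Multiplying by $z$ via the concatenation step yields the claimed $t(n, d, u; z]$, and the $O$-form in~\eqref{eqn:ChenUpper} drops out by elementary simplification.

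The delicate point is the last bookkeeping step: without the Stirling-type inequality $\binom{k}{u}(u/k)^u(d/k)^d \leq 1$, a crude bound $\ln \binom{k}{u} \leq u \ln(ek/u)$ would leave a spurious additive $u$-term that breaks the stated constant; likewise, one must prefer $\ln(n/k + 1)$ over $\ln(n/k)$ to cover the boundary regime $n$ close to $k$. Everything else—union bound, optimization of $t$, concatenation—is routine.
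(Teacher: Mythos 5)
Your proposal is correct, and it is worth being explicit that the paper itself contains no proof of Theorem~\ref{thr:ChenUpper}: the statement is imported verbatim from Chen et al.~\cite{chen2008upper}, whose derivation (as the paper characterizes it in Section~\ref{sub:intro:contri}) goes through a hypergraph construction rather than a direct random matrix. Your route --- an i.i.d.\ Bernoulli$(u/k)$ matrix, a union bound over the $\binom{n}{k}\binom{k}{u}$ disjoint pairs, one patched row per surviving bad pair, optimization of $t$ at $t^\star = q^{-1}\ln(qN_p)$, and then $z$-fold vertical concatenation --- is a clean alteration-method argument that recovers the stated constant exactly. The two bookkeeping points you flag are indeed the ones that matter: $\binom{k}{u}(u/k)^u(d/k)^d \le 1$ (a single term of $(u/k+d/k)^k=1$) is what cancels $\ln\binom{k}{u}$ against $\ln q$ and prevents a spurious $u\ln(\mathrm{e}k/u)$ term, and $\ln(n/k+1)$ rather than $\ln(n/k)$ covers $n$ close to $k$. (Two trivial loose ends you could mention in one line: round $t^\star$ up to an integer, and if $qN_p<1$ take $t=0$ and patch all $N_p<1/q$ pairs directly.) The instructive comparison is with the paper's own Theorem~\ref{thr:improvedChenUpper}: it keeps exactly your random matrix and union bound but replaces ``patch and concatenate $z$ times'' with a single Chernoff bound demanding at least $z$ good rows per pair, which is precisely what converts the multiplicative factor $z$ into the additive $1+z/\alpha$. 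So your argument correctly establishes the cited baseline, but the concatenation step is the part one must abandon to reach the improved bound.
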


\section{Review and analysis of Chen and Fu's work}
\label{sec:ReviewAndImprovement}

\subsection{Preliminaries}

To clarify the basis of our proposed algorithms, we review Chen and Fu's work~\cite{chen2009nonadaptive} which is the first and only work tackling the ``for all'' model in NATGT with a gap and with a decoding algorithm. They proposed schemes for finding the defective items using $t(n, d - \ell, u; z]$ tests in time $O(n^u \ln{n})$. However, the decoding time becomes impractical as $n$ or $u$ increases. The intuition of Chen and Fu algorithm is to initialize an approximate $S^\prime$ of size $u$ such that the outcome of the test on $S^\prime$ is positive. The algorithm then proceeds to increase the size of $S^\prime$ such that the cardinality of $S^\prime$ is not larger than the maximum number of defectives, i.e., $d$, and the outcome of a test on every subset of $u$ items in $S^\prime$ is positive.

To facilitate the problem of identifying defectives, the graph search problem is first introduced. Given a vertex set $V = \{1, \ldots, n \}$, the goal is to reconstruct a hidden graph $\sfH$ defined on $V$ by asking queries in the following format: for $U \subseteq V$, the query is ``Does a complete graph induced by $U$ contain any edge of $\sfH$?'' In other words, a pool containing all vertices in $U$ is positive if at least one edge of $\sfH$ is also an edge of the complete graph induced by $U$.

Given a finite set $V$, a hypergraph $\mathbb{H} = (V, \sfF)$ is a family $\sfF = \{ E_1, E_2, \ldots, E_m\}$ of subsets of $V$. The elements of $V$ are called vertices, and the subsets $E_i$'s are the edges of the hypergraph $\mathbb{H}$.

A hypergraph is called a $u$-hypergraph if each edge consists of exactly $u$ vertices. A subset of a set is called a $u$-subset if it contains exactly $u$ elements of the set. Let $W$ be a subset of $V$. A hypergraph is $u$-complete with respect to $W$ if and only if (iff) every $u$-subset of $W$ is an edge of the hypergraph.

Recall that our objective is to identify a set of defectives $S$ from a given set of items $N = [n]$. Let $S^\prime$ be a set such that $|S^\prime \setminus S| \leq g$ and $|S \setminus S^\prime| \leq g$. Note that there is more than one set $S^\prime$ satisfying these properties. Let $[n] = \{1, 2, \ldots, n \}$ be vertex set $V$. Suppose that a set of edges $\sfF$ contains all $u$-subsets of $S$ and a fraction of all or all $u$-subsets of every $S^\prime$. We can convert threshold group testing with a gap into the problem of reconstructing a hidden graph $\sfH$ in $\mathbb{H} = (V, \sfF)$ that is $u$-complete with respect to some $S^\prime$.

\subsection{Main idea}
\label{sub:review:main}

The main idea is to construct a family $\sfF$ such that, for any subset $X \in \sfF$, $|X| = u$, $|X \cap S| \geq \ell + 1$ and every $u$-subset $X^+ \subseteq S$ must be in $\sfF$. An approximate defective set $S^\prime$ is then recovered by using $\sfF$, where $|S^\prime \setminus S| \leq g$ and $|S \setminus S^\prime| \leq g$. Note that $S^\prime$ is the best defective set that can be recovered~\cite{damaschke2006threshold}.

To construct $\sfF$, an indicator of ``false negatives'' is introduced. We say that a set $X$ of the columns in a matrix appears in a row if every column in $X$ has a 1 in the row. For a subset $X$ of the columns in matrix $\cM$, we define $t_0^{\cM}(X)$ to be the number of negative pools in which all columns in $X$ appear. Attaining $S^\prime$ is done by increasing the size of an approximate defective set $S^\prime$ from $u$ until the properties $|S^\prime \setminus S| \leq g$ and $|S \setminus S^\prime| \leq g$ hold. In other words, the number of false positives and false negatives are up to $g$.

Given measurement matrix $\cM$, Chen and Fu supposed that $\bY$ is the outcome vector with up to $e$ erroneous outcomes in non-adaptive $(n, d, \ell, u)$-TGT. By setting $\cM$ as an $(n, d - \ell, u; 2e + 1]$-disjunct matrix, the authors obtained a decoding algorithm in which an approximate set $S^\prime$ is attained, as shown in Algorithm~\ref{alg:decodingThreshold}. Step~\ref{alg:createHypergraph} is to construct a family $\sfF$ and a hypergraph $\mathbb{H} = (V, \sfF)$. Step~\ref{alg:getDefectiveSet} is to attain $S^\prime$ by using $\mathbb{H}$, as illustrated in Fig.~\ref{fig:ChenFu}. More precisely, the algorithm first initializes set $S_1$ consisting of the $u$ vertices belonging to an edge of the family $\sfF$. A new set $S_{i + 1}$ is then created such that $|S_{i + 1}| = |S_i| + 1$. Set $S_{i + 1}$ is made equal to set $(S_i \cup A_i) \setminus B_i$ by selecting set $A_i$ of $g + 1$ elements in $V \setminus S_i$ and set $B_i$ of $g$ elements in $S_i$ such that $\mathbb{H}$ is $u$-complete with respect to $(S_i \cup A_i) \setminus B_i$. It is obvious that $|S_{i + 1}| = |S_i| + 1$. This process stops once either $S_i$ is not extendable or $|S_i| \geq d$. If the process stops when $i = m$, $S^\prime$ is set to $S_m$.

\begin{figure*}[ht]
\centering
\includegraphics[scale=0.5]{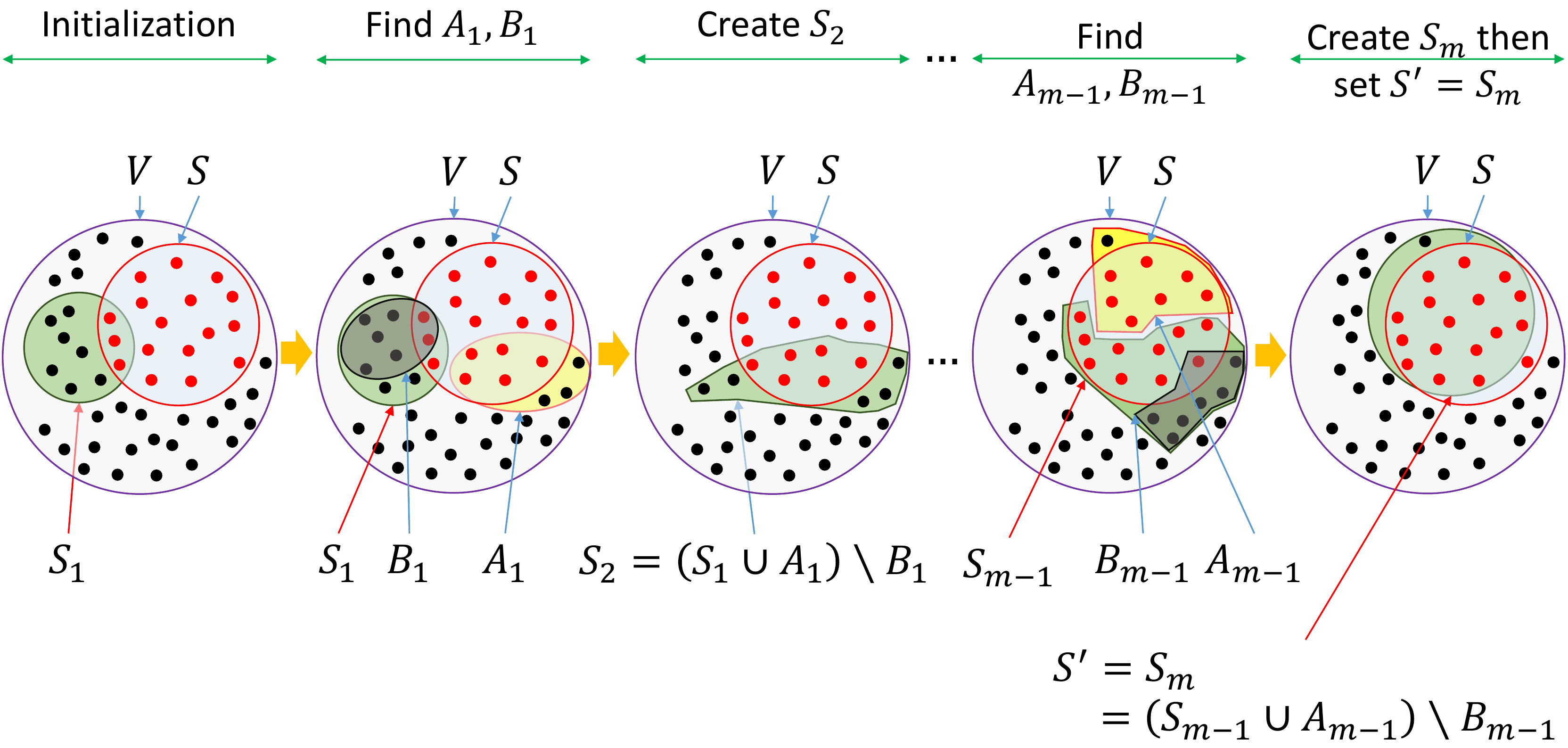}
\caption{Illustration of finding an approximate defective set $S^\prime$ of the defective set $S$ such that $|S^\prime \setminus S| \leq g$ and $|S \setminus S^\prime| \leq g$ for Algorithm~\ref{alg:decodingThreshold}. We set $g = 7, u = 10$, and $\ell = u - g - 1 = 2$.}
\label{fig:ChenFu}
\end{figure*}

\begin{algorithm}
\caption{[Algorithm 2~\cite{chen2009nonadaptive}] $\mathrm{Decoding}_1(\bY, \cM)$: Decoding procedure for non-adaptive $(n, d, \ell, u)$-TGT with up to $e$ erroneous outcomes.}
\label{alg:decodingThreshold}
\textbf{Input:} Outcome vector $\bY$, a $(n, d - \ell, u; z = 2e + 1]$-disjunct matrix $\cM$.\\
\textbf{Output:} Set of defective items $S^\prime$ s.t. $|S^\prime \setminus S| \leq g$ and $|S \setminus S^\prime| \leq g$.

\begin{algorithmic}[1]
\State Construct a hypergraph $\mathbb{H} = (V, \sfF)$, where $V = [n]$ is the vertex set of $n$ items and a $u$-subset $X \subseteq [n]$ is an edge in $\sfF$ iff $t_0^{\cM}(X) \leq e$. \label{alg:createHypergraph}
\State We want to establish increasing vertex-sets $S_i$'s, $|S_1| < |S_2| \ldots < |S_m|$, such that the hypergraph $\mathbb{H}$ is $u$-complete with respect to each $S_i$. As an initial $S_1$, we may choose all $u$ vertices of an arbitrary edge. To find $S_{i + 1}$ for $i \geq 1$, we check all possible cases to obtain some $(g + 1)$-subset $A_i$ in $V(\mathbb{H}) \setminus S_i$ and a $g$-subset $B_i$ in $S_i$ such that $\mathbb{H}$ is $u$-complete with respect to $(S_i \cup A_i) \setminus B_i$. If such a pair $A_i, B_i$ exists, then set $S_{i + 1} = (S_i \cup A_i) \setminus B_i$. Continue this process till either $S_m$ is not extendable or $|S_i| \geq d$. Output the set $S^\prime = S_m$. \label{alg:getDefectiveSet}
\end{algorithmic}
\end{algorithm}

By using an $(n, d - \ell, u; z = 2e + 1]$-disjunct matrix and Algorithm~\ref{alg:decodingThreshold}, we can attain an approximate defective set $S^\prime$ as follows.

\begin{theorem}~\cite[Theorem 4.4]{chen2009nonadaptive}
\label{thr:ChenDecoding}
For an $(n, d, \ell, u)$-TGT model with at most $e$ erroneous outcomes, there exists a non-adaptive algorithm that successfully identifies some set $S^\prime$ with $|S^\prime \setminus S| \leq g$ and $|S \setminus S^\prime| \leq g$, using no more than $t(n, d - \ell, u; z = 2e + 1]$ tests. Moreover, the decoding complexity is 
\begin{align}
&t(n, d - \ell, u; z] \times u \binom{n}{u} + (d - u) \binom{n - u}{g + 1} \binom{d - 1}{g} \binom{d}{u} \label{eq:ChenUpper} \\
&= O \left( z \left( \frac{k}{u} \right)^u \left( \frac{k}{d - \ell} \right)^{d - \ell} k \ln{\frac{n}{k}} \cdot u \binom{n}{u} \right), \nonumber
\end{align}
where $k = d - \ell + u$.
\end{theorem}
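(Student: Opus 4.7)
The test-count claim is immediate: applying Theorem~\ref{thr:ChenUpper} with parameters $(n, d-\ell, u; z = 2e+1]$ yields an $(n, d-\ell, u; 2e+1]$-disjunct matrix $\cM$ with $t(n, d-\ell, u; 2e+1]$ rows, to be used as the measurement matrix. The work is then to show that Algorithm~\ref{alg:decodingThreshold} applied to $(\bY, \cM)$ returns an $S'$ with $|S'\setminus S| \leq g$ and $|S\setminus S'|\leq g$ inside the quoted time budget. Correctness rests on two structural properties of the hypergraph $\mathbb{H} = (V, \sfF)$ constructed in Step~\ref{alg:createHypergraph}: (i) every $u$-subset $X \subseteq S$ is an edge, and (ii) every $u$-subset $X$ with $|X\cap S|\leq \ell$ is \emph{not} an edge. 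Property~(i) is direct, since any row in which $X$ appears all-one has a noiselessly positive outcome, and at most $e$ flips leave $t_0^{\cM}(X)\leq e$. For~(ii), I invoke the disjunct property with $S_2 = X$ and $|S_1| = d-\ell$, choosing $S_1$ either as a $(d-\ell)$-subset of $S\setminus X$ (when $|S\setminus X|\geq d-\ell$) or as $S\setminus X$ padded with non-defectives outside $X$ (when $|S\setminus X| < d-\ell$, feasible because $n \geq k$); in each of the $\geq 2e+1$ guaranteed rows the test then has at most $\ell$ defectives and is noiselessly negative, so at least $e+1$ of them survive the flips and $t_0^{\cM}(X)>e$. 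Combined, (i)--(ii) imply every edge of $\mathbb{H}$ carries $\geq \ell+1$ defectives and that $\mathbb{H}$ is $u$-complete with respect to $S$.

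For the output $S' = S_m$, the $u$-completeness of $\mathbb{H}$ with respect to $S'$ forces every $u$-subset of $S'$ to carry at least $\ell+1$ defectives. If $|S'\setminus S|\geq g+1$, I extend any $(g+1)$-subset $Y\subseteq S'\setminus S$ to a $u$-subset $X\subseteq S'$ (possible since $|S'|\geq |S_1|=u$), obtaining $|X\cap S|\leq u-(g+1) = \ell$, which contradicts the edge property; hence $|S'\setminus S|\leq g$. For the symmetric bound, suppose the loop halts with $|S'| < d$ yet $|S\setminus S'|\geq g+1$; then I pick $A$ as a $(g+1)$-subset of $S\setminus S'$ and $B$ as a $g$-subset of $S'$ that covers $S'\setminus S$ (padded from $S'\cap S$ when $|S'\setminus S|<g$, possible because $|S'|\geq u \geq g+1$), so that $T = (S'\cup A)\setminus B \subseteq S$ and $\mathbb{H}$ is automatically $u$-complete w.r.t.\ $T$; this contradicts non-extendability. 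If instead the loop halts because $|S'|\geq d$, then $|S\setminus S'| = |S'\setminus S| + |S| - |S'|\leq g + d - d = g$ by direct counting.

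The decoding time splits into the two summands of~\eqref{eq:ChenUpper}. Step~\ref{alg:createHypergraph} sweeps all $\binom{n}{u}$ candidate $u$-subsets and for each scans the $t(n,d-\ell,u;2e+1]$ rows of $\cM$ to test whether the $u$ designated positions all carry a $1$, giving $O\bigl(t(n,d-\ell,u;2e+1]\cdot u\binom{n}{u}\bigr)$. The extension loop takes at most $d-u$ iterations because $|S_i|$ grows by one per step and is capped at $d$; in each iteration it enumerates at most $\binom{n-u}{g+1}$ sets $A_i$ and $\binom{d-1}{g}$ sets $B_i$, and for every pair performs $\binom{d}{u}$ constant-time lookups in the precomputed $\sfF$ to verify $u$-completeness of $(S_i\cup A_i)\setminus B_i$, yielding the second summand. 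The asymptotic form then follows by substituting the bound from Theorem~\ref{thr:ChenUpper}. I expect property~(ii) in the first paragraph to be the main obstacle: because the disjunct definition insists on $|S_1|=d-\ell$ exactly, one must carefully split into subset versus padded-superset cases depending on $|S\setminus X|$ and check that enough non-defective slots outside $X$ remain for the padding---which is precisely where the hypothesis $n\geq k = d-\ell+u$ from Theorem~\ref{thr:ChenUpper} is spent.
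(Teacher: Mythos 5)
The first thing to know is that the paper does not actually prove this statement: it is quoted from Chen and Fu, and the paper's first contribution (Theorem~\ref{thr:corrected} and the Appendix) is precisely to argue that the decoding-complexity part of it is \emph{inaccurate}. Your correctness argument is fine and matches the standard one (Chen and Fu's Lemma~4.1 plus the extendability argument), with one gloss worth tightening: in property~(ii), when $|S\setminus X|\ge d-\ell$ you can only zero out a $(d-\ell)$-subset $S_1$ of $S\setminus X$, so the guaranteed rows may still contain the $|S\setminus X|-(d-\ell)\le \ell-|X\cap S|$ leftover defectives; the test is noiselessly negative only because $|X\cap S|+(\ell-|X\cap S|)=\ell$, a counting step you never state --- and it, not the padding feasibility you flag at the end, is the delicate point there.

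The genuine gap is the last sentence of your complexity paragraph. Passing from the exact expression~\eqref{eq:ChenUpper} to the displayed $O(\cdot)$ requires $(d-u)\binom{n-u}{g+1}\binom{d-1}{g}\binom{d}{u}=O\bigl(t(n,d-\ell,u;z]\cdot u\binom{n}{u}\bigr)$, and this is false in general: the Appendix exhibits the regime $d=2u$, $\ell=0$, $u=g+1\ge 2$, $n\ge 8(2u-1)/(8-\sqrt{7})$, in which the second summand exceeds $\binom{n}{u}$ by a factor on the order of $16^{u}/(7(u+2))$, which outgrows the multiplier $t\cdot u=O\bigl(z(27/4)^{u}u^{2}\ln n\bigr)$ carried by the first summand. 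So the asymptotic line of the theorem cannot be derived from the exact sum, and ``the asymptotic form then follows by substituting'' is exactly the inaccuracy the paper is correcting. Separately, your accounting of Step~\ref{alg:getDefectiveSet} via membership lookups in the precomputed $\sfF$ is a legitimate implementation that does support (up to an $O(u)$ factor per lookup) the exact sum in~\eqref{eq:ChenUpper}; the paper instead charges $u\cdot t(n,d-\ell,u;z]$ per $u$-subset verification, recomputing $t_0^{\cM}(Z)$ from the matrix, which is how it arrives at the larger corrected bound~\eqref{eq:corrected}. Either way, the big-$O$ equality in the statement does not hold, so a complete ``proof'' of the theorem as written is not achievable.
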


The complexity of the theorem above is attained by taking the sum of the complexities of Steps~\ref{alg:createHypergraph} and~\ref{alg:getDefectiveSet}. Step~\ref{alg:createHypergraph} is done in time $t(n, d - \ell, u; z] \times u \binom{n}{u}$. Step~\ref{alg:getDefectiveSet} is done in time $(d - u) \binom{n - u}{g + 1} \binom{d - 1}{g} \binom{d}{u}$, which is \textit{inaccurate} in general. A detailed analysis is given in the Appendix. Here we present a more accurate version of Theorem~\ref{thr:ChenDecoding}.

\begin{theorem}[A more accurate version of Theorem 4.4 in\cite{chen2009nonadaptive}]
\label{thr:corrected}
For an $(n, d, \ell, u)$-TGT model with at most $e$ erroneous outcomes, there exists a non-adaptive algorithm that successfully identifies some set $S^\prime$ with $|S^\prime \setminus S| \leq g$ and $|S \setminus S^\prime| \leq g$ using no more than $t(n, d - \ell, u; z = 2e + 1]$ tests. Moreover, the decoding complexity is 
\begin{align}
&O \left( t(n, d - \ell, u; z] \times u \left( \binom{n}{u} \right. \right. \nonumber \\
& \left. \left. + (d - u) \binom{n - u}{g + 1} \binom{d - 1}{g} \binom{d}{u} \right) \right) \label{eq:corrected} \\
=& O \left( z \left( \frac{k}{u} \right)^u \left( \frac{k}{d - \ell} \right)^{d - \ell} k \ln{\frac{n}{k}} \right. \nonumber \\
&\left. \times u \left( \binom{n}{u} + (d - u) \binom{n - u}{g + 1} \binom{d - 1}{g} \binom{d}{u} \right) \right), \nonumber
\end{align}
where $k = d - \ell + u$.
\end{theorem}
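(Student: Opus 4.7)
The plan is to re-audit the decoding complexity of Algorithm~\ref{alg:decodingThreshold} step-by-step while keeping the number-of-tests bound and the correctness guarantee from Theorem~\ref{thr:ChenDecoding} intact; only the time analysis needs revision. The basic cost unit is a single \emph{membership query} ``is the $u$-subset $X$ an edge of $\mathbb{H}$?'', which by definition means testing $t_0^{\cM}(X) \le e$. Since the algorithm as stated maintains no auxiliary lookup table for $\sfF$, each such query is resolved by scanning the $t = t(n, d-\ell, u; z]$ rows of $\cM$ and, for each row, checking whether all $u$ columns of $X$ carry a $1$; this costs $O(u \cdot t)$. This per-query cost is the quantity that Chen and Fu's derivation silently dropped in the analysis of Step~\ref{alg:getDefectiveSet}, and recovering it is what produces the corrected bound.

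For Step~\ref{alg:createHypergraph}, the number of membership queries is exactly $\binom{n}{u}$, so the cost is $O(u \cdot t \cdot \binom{n}{u})$, agreeing with Chen and Fu. For Step~\ref{alg:getDefectiveSet}, I would bound the number of iterations by $d - u$ (since $|S_1| = u$ and $|S_{i+1}| = |S_i| + 1 \le d$), and within iteration $i$ bound the number of candidate pairs $(A_i, B_i)$ by $\binom{n - u}{g+1}\binom{|S_i|}{g} \le \binom{n-u}{g+1}\binom{d-1}{g}$. Verifying that $\mathbb{H}$ is $u$-complete with respect to $(S_i \cup A_i) \setminus B_i$ requires querying all $\binom{|S_{i+1}|}{u} \le \binom{d}{u}$ of its $u$-subsets, and the key point is that each of those queries is an $O(u \cdot t)$ operation, not $O(1)$.

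Multiplying these four factors gives that Step~\ref{alg:getDefectiveSet} runs in time
\[
O\!\left( u \cdot t(n, d-\ell, u; z] \cdot (d-u) \binom{n-u}{g+1}\binom{d-1}{g}\binom{d}{u} \right),
\]
and summing with Step~\ref{alg:createHypergraph} and pulling out the common $u \cdot t(n, d-\ell, u; z]$ factor yields \eqref{eq:corrected}. The asymptotic form then follows by substituting the bound on $t(n, d-\ell, u; z]$ from Theorem~\ref{thr:ChenUpper} with the parameters $(d - \ell, u)$ in place of $(d, u)$, so $k = d - \ell + u$ exactly as stated.

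The main obstacle, and the reason Chen and Fu's accounting looks plausible at first glance, is that one is tempted to treat Step~\ref{alg:createHypergraph} as a preprocessing stage that materialises $\sfF$ and allows $O(1)$ queries thereafter; in fact doing so would cost $\Theta(\binom{n}{u})$ additional storage and a hashing/indexing scheme that the pseudocode does not specify, so charging per-query scans of $\cM$ is the honest reading of the algorithm. Everything else is a purely clerical recount, and the correctness of the recovered set $S^\prime$ satisfying $|S^\prime \setminus S| \le g$ and $|S \setminus S^\prime| \le g$ is inherited verbatim from Theorem~\ref{thr:ChenDecoding}, whose combinatorial content is unaffected by the complexity correction.
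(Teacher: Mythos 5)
Your proposal is correct and follows essentially the same route as the paper's Appendix: both identify that each membership/completeness check of a $u$-subset costs $O(u \cdot t(n, d-\ell, u; z])$ (a scan of the measurement matrix to evaluate $t_0^{\cM}(\cdot)$) rather than $O(1)$, bound Step~\ref{alg:getDefectiveSet} by $(d-u)\binom{n-u}{g+1}\binom{d-1}{g}\binom{d}{u}$ such checks, and sum with the $\binom{n}{u}$ checks of Step~\ref{alg:createHypergraph} to obtain \eqref{eq:corrected}. The only content of the paper's proof you do not reproduce is the (inessential for the upper bound itself) demonstration that the second term genuinely cannot be discarded relative to $\binom{n}{u}$.
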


\subsection{Example for Algorithm~\ref{alg:decodingThreshold}}
\label{sub:review:eg}

In this section, we demonstrate Algorithm~\ref{alg:decodingThreshold} by setting $n = 6, d = 4, \ell = 0, u = 2$, and $z = 1.$ This means that $g = u - \ell - 1 = 1$ and $e = 0$. We assume that the defective items are 1, 2, 4, and 5; i.e., the input vector is $\bX = (1, 1, 0, 1, 1, 0)^T$. The true defective set is therefore $S = \{1, 2, 4, 5 \} = \supp(\bX)$. The $(n = 6, d - \ell = 4, u = 2; z = 1]$-disjunct matrix $\cM$ is as follows:

\begin{equation}
\cM = \left[
\begin{tabular}{cccccc}
1 & 1 & 0 & 0 & 0 & 0 \\
1 &	0 &	1 &	0 &	0 &	0 \\
1 &	0 &	0 &	1 &	0 &	0 \\
1 &	0 &	0 &	0 &	1 &	0 \\
1 &	0 &	0 &	0 &	0 &	1 \\
0 &	1 &	1 &	0 &	0 &	0 \\
0 &	1 &	0 &	1 &	0 &	0 \\
0 &	1 &	0 &	0 &	1 &	0 \\
0 &	1 &	0 &	0 &	0 &	1 \\
0 &	0 &	1 &	1 &	0 &	0 \\
0 &	0 &	1 &	0 &	1 &	0 \\
0 &	0 &	1 &	0 &	0 &	1 \\
0 &	0 &	0 &	1 &	1 &	0 \\
0 &	0 &	0 &	1 &	0 &	1 \\
0 &	0 &	0 &	0 &	1 &	1 \\
1 & 0 & 1 & 0 & 1 & 0 \\
0 & 1 & 1 & 0 & 1 & 1 \\
0 & 0 & 1 & 1 & 0 & 1 \\
1 & 0 & 1 & 1 & 0 & 1 \\
1 & 0 & 1 & 0 & 0 & 1
\end{tabular}
\right], \bY = \cM \otimes_{0, 2} \bX = \left[ \begin{tabular}{c}
1 \\
1 \\
1 \\
1 \\
0 \\
1 \\
1 \\
1 \\
0 \\
1 \\
1 \\
0 \\
1 \\
0 \\
1 \\
1 \\
1 \\
0 \\
1 \\
0
\end{tabular}
\right]. \label{eqn:exChenFu}
\end{equation}

We assume that the observed vector is $\bY$, as in~\eqref{eqn:exChenFu}. Algorithm~\ref{alg:decodingThreshold} proceeds as follows. In Step~\ref{alg:createHypergraph}, hypergraph $\mathbb{H} = (V, \sfF)$ is constructed with the set of vertexes $V = [6] = \{1, 2, 3, 4, 5, 6 \}$. A search is made for all 2-subsets $X \in V$ in order to form the set of edges $\sfF$ such that $t^{\cM}_0(X) \leq e = 0$. From~\eqref{eqn:exChenFu}, we get $\sfF = \{ \{1, 2\}, \{1, 4\}, \{1, 5\}, \{2, 3\}, \{2, 4\}, \{2, 5\}, \{3, 5\}, \{4, 5\}$, $\{5, 6\} \}$\footnote{We delineate this example to ensure understanding. Since we use $u = 2$, hypergraph $\mathbb{H}$ becomes a normal graph in which an edge consists of two vertices. However, once $u \geq 3$, an edge in $\mathbb{H}$ consists of at least three vertices. Graph $\mathbb{H}$ is thus no longer a normal graph.}.

Step~\ref{alg:getDefectiveSet} starts with an initial $2$-subset $S_1 = \{1, 2\}$. All possible cases are checked to obtain some 2-subset $A_1$ ($g + 1 = 2$) in $V \setminus S_1 = \{ 3, 4, 5, 6\}$, which is some element of $\{ \{3, 4\}, \{3, 5\}, \{3, 6\}, \{4, 5\}, \{4, 6\}, \{5, 6\} \}$, and a 1-subset $B_1$ ($g = 1$) in $S_1$, which is some element of $\{ \{1 \}, \{2 \} \}$, such that $\mathbb{H}$ is $2$-complete with respect to $(S_1 \cup A_1) \setminus B_1$. Since $A_1 = \{3, 5\}$ and $B_1 = \{1 \}$ ensure that the condition holds, set $S_2 = (S_1 \cup A_1) \setminus B_1 = \{2, 3, 5 \}$.

Since $|S_2| = 3 < 4 = d$, we continue Step~\ref{alg:getDefectiveSet} by choosing a 2-subset $A_2 \subseteq V \setminus S_2 = \{1, 4, 6 \}$ and a 1-subset $B_2 \subseteq S_2$. The lists of potential subsets for $A_2$ and $B_2$ are $\{ \{1, 4 \}, \{1, 6 \}, \{4, 6 \} \}$ and $\{ \{2 \}, \{3 \}, \{5 \} \}$, respectively. We choose $A_2 = \{1, 4 \}$ and $B_2 = \{3 \}$ because $\mathbb{H}$ is $2$-complete with respect to $(S_2 \cup A_2) \setminus B_2$. Set $S_3 = (S_2 \cup A_2) \setminus B_2 = \{1, 2, 4, 5 \}$. Since $|S_3| = 4 \geq 4 = d$, the algorithm stops and output $S^\prime = S_3 = \{1, 2, 4, 5 \}$. In this case, the approximate defective set $S^\prime$ is identical to the true defective set $S$.

\section{Improved upper bounds on the number of tests for disjunct matrix}
\label{main:sec:improvedDisjunct}

In this section, we present better exact upper bounds on the number of tests compared to the one in Theorem~\ref{thr:ChenUpper}.

\subsection{First result}
\label{main:sub:improvedDisjunct:first}

The upper bound on the number of tests with Theorem~\ref{thr:ChenUpper} is large because of the multiplicity $z$. We present a better upper bound on the number of tests as follows.

\begin{theorem}
\label{thr:improvedChenUpper}
Let $2 \leq u \leq d < k = d + u \leq n$ be integers with $(d + u)^2/u \leq n$. Set $\alpha = k \ln{\frac{\mathrm{e} n}{k}} + u \ln{\frac{\mathrm{e} k}{u}}$. For any positive integer $z$, there exists an $h \times n$ $(n, d, u; z]$-disjunct matrix with
\begin{equation}
h(n, d, u; z] = O \left( \left( 1 + \frac{z}{\alpha} \right) \cdot \left( \frac{k}{u} \right)^u \left( \frac{k}{d} \right)^d k \ln{\frac{n}{k}} \right). \label{eqn:improvedChenUpper}
\end{equation}
\end{theorem}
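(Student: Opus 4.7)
The plan is to construct $\cM$ by independent random sampling and then use a Chernoff bound together with the union bound, choosing the row probability so as to maximize the ``good-row'' probability while only paying additively for $z$ and for the union bound. First, I would let $p=u/k$ and generate each entry of an $h\times n$ binary matrix $\cM$ independently as $\mathrm{Bernoulli}(p)$. For any fixed disjoint pair $(S_{1},S_{2})$ with $|S_{1}|=d$, $|S_{2}|=u$, call a row \emph{good} if it has a $1$ in every column of $S_{2}$ and a $0$ in every column of $S_{1}$. Independence gives that each row is good with probability
\[
q = p^{u}(1-p)^{d} = \left(\tfrac{u}{k}\right)^{u}\left(\tfrac{d}{k}\right)^{d},
\]
which is exactly the value maximizing $p^u(1-p)^d$, and the count $X$ of good rows is $\mathrm{Binomial}(h,q)$.

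Next I would apply the multiplicative Chernoff lower tail. If $\mu:=hq\ge 2z$, then $z\le\mu/2$ and
\[
\Pr[X<z] \le \Pr[X\le\mu/2] \le \rE^{-\mu/8}.
\]
The union bound must range over $N=\binom{n}{d}\binom{n-d}{u}$ ordered disjoint pairs. Here I would invoke the classical identity $\binom{n}{d}\binom{n-d}{u}=\binom{n}{k}\binom{k}{u}$ and the estimate $\binom{a}{b}\le(\rE a/b)^{b}$ to conclude
\[
\ln N \;\le\; k\ln\tfrac{\rE n}{k} + u\ln\tfrac{\rE k}{u} \;=\; \alpha.
\]
Hence the event ``$\cM$ fails to be $(n,d,u;z]$-disjunct'' has probability at most $N\cdot \rE^{-\mu/8}\le \rE^{\alpha-\mu/8}$, which is strictly less than $1$ once $\mu\ge 8\alpha$. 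Combining both requirements, it suffices to take $hq \;\ge\; 2z+8\alpha$, i.e.\
\[
h \;=\; \left\lceil \tfrac{2z+8\alpha}{q}\right\rceil \;=\; O\!\left( (z+\alpha)\cdot \left(\tfrac{k}{u}\right)^{u}\!\left(\tfrac{k}{d}\right)^{d}\right).
\]
By the probabilistic method, such an $h\times n$ disjunct matrix exists.

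Finally I would simplify $\alpha$ under the hypothesis $n\ge (d+u)^{2}/u = k^{2}/u$. This hypothesis forces $n/k\ge k/u\ge 2$, so $k\ln(\rE n/k)=O(k\ln(n/k))$, and moreover $\ln(k/u)\le\ln(n/k)$ together with $u\le k$ yields $u\ln(\rE k/u)=O(k\ln(n/k))$. Therefore $\alpha = O(k\ln(n/k))$, and
\[
h = O\!\left((z+\alpha)\left(\tfrac{k}{u}\right)^{u}\!\left(\tfrac{k}{d}\right)^{d}\right) = O\!\left(\bigl(1+\tfrac{z}{\alpha}\bigr)\cdot k\ln\tfrac{n}{k}\cdot\left(\tfrac{k}{u}\right)^{u}\!\left(\tfrac{k}{d}\right)^{d}\right),
\]
which is the bound in \eqref{eqn:improvedChenUpper}.

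The main subtlety, and the reason this beats the $z\cdot t(n,d,u;1]$ bound of Theorem~\ref{thr:ChenUpper}, is the Chernoff step: a naive argument would require $h q \ge z \cdot \ln N$ to make each of the $N$ binomial events yield at least $z$ successes, giving a multiplicative dependence on $z$. Applying the multiplicative lower-tail Chernoff at $\delta=1/2$ instead decouples these two requirements, so the cost of forcing $z$ successes and the cost of union-bounding over disjoint pairs enter \emph{additively} as $2z+8\alpha$ rather than as a product. Aside from this, the only care needed is verifying that $\alpha=O(k\ln(n/k))$ under the given size hypothesis on $n$, which is a routine consequence of $n\ge k^{2}/u$.
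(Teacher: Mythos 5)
Your proposal is correct and follows essentially the same route as the paper: an i.i.d.\ Bernoulli$(u/k)$ random matrix, the multiplicative Chernoff lower tail for the count of good rows, and a union bound over $\binom{n}{k}\binom{k}{u}$ pairs with $\ln N\le\alpha$. The only difference is cosmetic but pleasant: you fix the Chernoff parameter at $\delta=1/2$ and demand $hq\ge 2z+8\alpha$, which yields the additive $z+\alpha$ dependence directly, whereas the paper sets $z=(1-\delta)qh$, solves a quadratic for $\delta$, and invokes the mean value theorem to show $1/\delta^2=O(1+z/\alpha)$ --- your version reaches the same bound with less machinery.
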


\begin{proof}
Consider a randomly generated $h \times n$ matrix $\cG = (g_{ij})_{1 \leq i \leq h, 1 \leq j \leq n}$ in which each entry $g_{ij}$ is assigned to 1 with probability $p$ and to 0 with probability $1 - p$. For any pair of disjoint subsets $S_1, S_2 \subset [n]$ such that $|S_1| = u$ and $|S_2| = d$, we denote the event that for a row, there are 1's among the columns in $S_1$ and all 0's among the columns in $S_2$ on the same row by \textit{a good event}. The probability that the good event happens is:
\begin{equation}
q = p^u (1 - p)^d. \label{eqn:1row}
\end{equation}

Set $\alpha = k \ln{\frac{\mathrm{e} n}{k}} + u \ln{\frac{\mathrm{e} k}{u}}$ and $\beta = 1 - 2/\alpha$. It is obvious that $0 < \alpha, \beta$. We then set $z = (1 - \delta) q h$, where $0 < \delta < 1$. We will later prove that there always exists $\delta$ which depends on $n, u, d,$ and $z$ such that $z = (1 - \delta) q h$. For a pair of disjoint subsets $S_1, S_2 \subset [n]$ such that $|S_1| = u$ and $|S_2| = d$, let $\sfX_i = 1$ be an event that a good event occurs at row $i$ and $\sfX_i = 0$ be an event that a good event does not occur at row $i$. It is obvious that $\Pr[\sfX_i = 1] = q$, $\Pr[\sfX_i = 0] = 1 - q$, and $E[\sfX_i] = q$. Let $\sfX = \sum_{i = 1}^h \sfX_i$ denote the number of the good events happen for $h$ rows. We get $\mu = E[\sfX] = \sum_{i = 1}^h E[\sfX_i] = q h$.

By using Chernoff's bound, for fixed $S_1$ and $S_2$, the probability that a good event occurs for up to $z$ rows among $h$ rows is
\begin{align}
\Pr[\sfX \leq z] &= \Pr[\sfX \leq (1 - \delta) \mu] \nonumber \\
&\leq \exp \left( - \frac{\delta^2 \mu}{2} \right) = \exp \left( - \frac{\delta^2 q h}{2} \right). \nonumber
\end{align}

Using a union bound, the expected value of the number of good events in which each good event occurs for no more than $z$ rows among $h$ rows for all disjoint subsets $S_1, S_2 \subset [n]$ with $|S_1| = u$ and $|S_2| = d$, i.e., the probability that $\cG$ is not an $(n, d, u; z]$-disjunct matrix, is at most
\begin{align}
g(p, h, u, d, n) &= \binom{n}{d + u} \binom{d + u}{u} \Pr[\sfX \leq z] \nonumber \\
&\leq \binom{n}{k} \binom{k}{u} \exp \left( - \frac{\delta^2 q h}{2} \right). \label{eqn:2Func}
\end{align}

To ensure the existence of an $(n, d, u, g; z]$-disjunct matrix $\cG$, one needs to find $p$ and $h$ such that $g(p, h, u, d, n) < 1$. Set $p = \frac{u}{d + u} = \frac{u}{k}$ and $q =  p^u (1 - p)^{d} = \left( \frac{u}{k} \right)^u \left( \frac{d}{k} \right)^{d}$. We then have

\begin{equation}
g(p, h, u, d, n) \leq \binom{n}{k} \binom{k}{u} \exp \left( - \frac{\delta^2 q h}{2} \right) < 1. \nonumber
\end{equation}

For this to hold, it suffices that
\begin{alignat}{3}
&& \binom{n}{k} \binom{k}{u} &\leq \left( \frac{\mathrm{e} n}{k} \right)^{k} \left( \frac{\mathrm{e} k}{u} \right)^{u} < \exp \left( \frac{\delta^2 q h}{2} \right) \label{eqn:Exist2} \\
\Longleftrightarrow&& h &> \frac{2}{\delta^2} \cdot \frac{1}{q} \cdot \left( k \ln{\frac{\mathrm{e} n}{k}} + u \ln{\frac{\mathrm{e} k}{u}}  \right) \nonumber \\
\Longleftrightarrow&& &> \frac{2}{\delta^2} \cdot \left( \frac{k}{u} \right)^u \left( \frac{k}{d} \right)^d \left( k \ln{\frac{\mathrm{e} n}{k}} + u \ln{\frac{\mathrm{e} k}{u}}  \right). \label{eqn:Exist3}
\end{alignat}

In the above, we have \eqref{eqn:Exist2} because $\binom{a}{b} \leq \left( \frac{\mathrm{e} a}{b} \right)^b$. Since $p = \frac{u}{k}$, from \eqref{eqn:Exist3}, if we set 
\begin{align}
h &= h(n, d, u; z] \nonumber \\
&= \frac{3}{\delta^2} \cdot \frac{1}{q} \cdot \left( k \ln{\frac{\mathrm{e} n}{k}} + u \ln{\frac{\mathrm{e} k}{u}}  \right) \nonumber \\
&= \frac{3}{\delta^2} \cdot \frac{1}{q} \cdot \alpha, \mbox{ where } \alpha = k \ln{\frac{\mathrm{e} n}{k}} + u \ln{\frac{\mathrm{e} k}{u}}, \label{eqn:calZ} \\
&= \frac{3}{\delta^2} \cdot \left( \frac{k}{u} \right)^u \left( \frac{k}{d} \right)^d \cdot \left( k \ln{\frac{\mathrm{e} n}{k}} + u \ln{\frac{\mathrm{e} k}{u}} \right), \nonumber
\end{align}
then $g(p, h, u, w, n) < 1$; i.e., there exists an $(n, d, u; z]$-disjunct matrix of size $h \times n$.

We now calculate $\delta$ versus $n, d, u$, and $z$. Since $z = (1 - \delta) qh$ and $h = \frac{3}{\delta^2} \cdot \frac{1}{q} \cdot \alpha$ in~\eqref{eqn:calZ}, we have:
\begin{align}
&& z &= (1 - \delta) qh = (1 - \delta) \cdot \frac{3\alpha}{\delta^2} \\
&\Longleftrightarrow& z \delta^2 + 3\alpha \delta - 3\alpha &= 0
\end{align}

Since the left side is a quadratic equation of $\delta$ and $\delta > 0$, we can derive
\begin{equation}
\delta = \frac{-3\alpha + \sqrt{9\alpha^2 + 12 \alpha z}}{2z} = \frac{\sqrt{3\alpha} \left( \sqrt{3\alpha + 4z} - \sqrt{3 \alpha} \right) }{2z}. \label{eqn:Delta}
\end{equation}

Let $f(x) = \sqrt{x}$. We have $f(x)$ is continuous on a closed interval $[3 \alpha, 3 \alpha + 4z]$ and differentiable on the open interval $(3 \alpha, 3 \alpha + 4z)$. By using the Lagrange's mean value theorem, then there is at least one point $b \in (3 \alpha, 3 \alpha + 4z)$ such that
\begin{align}
f(3\alpha + 4z) - f(3 \alpha) &= \sqrt{3\alpha + 4z} - \sqrt{3 \alpha} \nonumber \\
&= 4z \cdot f^\prime(b) = 4z \cdot \frac{1}{2 \sqrt{b}} = \frac{2z}{\sqrt{b}}.
\end{align}

Combine with~\eqref{eqn:Delta}, we get
\begin{align}
\delta = \frac{\sqrt{3\alpha} \left( \sqrt{3\alpha + 4z} - \sqrt{3 \alpha} \right) }{2z} = \frac{\sqrt{3\alpha} }{2z} \cdot \frac{2z}{\sqrt{b}} = \sqrt{\frac{3 \alpha}{b}}.
\end{align}

Because $b \in (3 \alpha, 3 \alpha + 4z)$, the following condition is straightforwardly attained
\begin{equation}
\frac{1}{\delta^2} = \frac{b}{3 \alpha} \in \left( 1, 1 + \frac{4z}{3 \alpha} \right).
\end{equation}

Therefore, the number of tests required is
\begin{align}
h &= h(n, d, u; z] \nonumber \\
&= \frac{3}{\delta^2} \cdot \left( \frac{k}{u} \right)^u \left( \frac{k}{d} \right)^d \cdot \left( k \ln{\frac{\mathrm{e} n}{k}} + u \ln{\frac{\mathrm{e} k}{u}} \right) \nonumber \\
&< 3 \left( 1 + \frac{4z}{3 \alpha} \right) \cdot \left( \frac{k}{u} \right)^u \left( \frac{k}{d} \right)^d \cdot \left( k \ln{\frac{\mathrm{e} n}{k}} + u \ln{\frac{\mathrm{e} k}{u}} \right). \nonumber
\end{align}

\end{proof}

Since $\alpha$ is always larger than $4/3$, $4z/(3\alpha)$ is always smaller than $z$. It implies that the upper bound on the number of tests in Theorem~\ref{thr:improvedChenUpper} is always tighter than the one in Theorem~\ref{thr:ChenUpper}.

\textbf{Discussion of number of tests for TGT and CGT:} With the same settings for $n, d$, and the maximum number of erroneous outcomes $\lfloor (z-1)/2 \rfloor$, what are the similarities and differences for the number of tests between TGT and CGT? We first transform~\eqref{eqn:improvedChenUpper}:
\begin{equation}
h(n, d, u; z] = t(n, d, u; 1] +  O \left( \frac{z}{\alpha} \right) \cdot t(n, d, u; 1]. \nonumber
\end{equation}

In Corollary 19~\cite{ngo2011efficiently}, Ngo, Porat, and Rudra show that the number of tests needed to handle $\lfloor (z-1)/2 \rfloor$ erroneous outcomes is $O(d^2 \log{n}) + O(zd) = t(n, d, 1; 1] + O \left( \frac{z}{d \log{n}} \right) \cdot t(n, d, 1; 1]$. It is well known that $t(n, d, 1; 1] = O(d^2 \log{n})$ is the achievable bound on the number of tests for the noiseless setting ($z = 1$). The authors prove that we only need $O \left( \frac{z}{d \log{n}} \right) \cdot t(n, d, 1; 1]$ additional tests to handle up to $\lfloor (z-1)/2 \rfloor$ erroneous outcomes instead of using $z \times t(n, d, 1; 1]$. The result for Theorem~\ref{thr:improvedChenUpper} shares this property. Since $t(n, d, u; 1]$ is the achievable number of tests for the noiseless setting in TGT, we need only $O \left( \frac{z}{\alpha} \right) \cdot t(n, d, u; 1]$ additional tests to handle up to $\lfloor (z-1)/2 \rfloor$ erroneous outcomes instead of $z \times t(n, d, u; 1]$ as in Theorem~\ref{thr:ChenUpper}.

\subsection{Second result}
\label{main:sub:improvedDisjunct:second}

With an addition constraint on $z$, an alternative version of Theorem~\ref{thr:improvedChenUpper} can be derived to directly attain a better upper bound on the number of tests compared with the upper bound in Theorem~\ref{thr:ChenUpper}.

\begin{theorem}
\label{thr:thr:improvedChenUpper2}
Let $2 \leq u \leq d < k = d + u \leq n$ be integers with $(d + u)^2/u \leq n$. Set $\alpha = k \ln{\frac{\mathrm{e} n}{k}} + u \ln{\frac{\mathrm{e} k}{u}}$ and $\beta = 1 - 2/\alpha$. For any integer $z \geq 4/\beta^2 + 1$, there exists an $h \times n$ $(n, d, u; z]$-disjunct matrix with
\begin{align}
&h(n, d, u; z] \nonumber \\
=& \left\lfloor \frac{2}{\delta^2} \cdot \left( \frac{k}{u} \right)^u \left( \frac{k}{d} \right)^d \cdot \left( k \ln{\frac{\mathrm{e} n}{k}} + u \ln{\frac{\mathrm{e} k}{u}}  \right)  \right\rfloor + 1 \nonumber \\
=& O \left( \frac{1}{\delta^2} \cdot \left( \frac{k}{u} \right)^u \left( \frac{k}{d} \right)^d \cdot k \ln{\frac{n}{k}}  \right) \nonumber \\
<& t(n, d, u; z] = z \left( \frac{k}{u} \right)^u \left( \frac{k}{d} \right)^d \left[ 1 + k \left(1 + \ln{\left( \frac{n}{k} + 1 \right)} \right) \right], \nonumber
\end{align}
where $0 < \delta \leq \beta$.
\end{theorem}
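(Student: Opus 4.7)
The plan is to closely follow the random-matrix argument from the proof of Theorem \ref{thr:improvedChenUpper}, but with two refinements: replace the constant $3$ in the sizing of $h$ by $2$, made possible by a ``$+1$'' rounding that directly enforces the strict inequality on the failure probability; and invert the quadratic relation between $\delta$ and $z$ so that the range $\delta\in(0,\beta]$ is parameterized by the hypothesis $z\geq 4/\beta^{2}+1$. I would begin with the same Bernoulli$(p)$ construction at $p=u/k$, so that the per-row ``good event'' probability for a fixed disjoint pair $(S_{1},S_{2})$ with $|S_{1}|=u$, $|S_{2}|=d$ is $q=(u/k)^{u}(d/k)^{d}$. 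Chernoff's lower-tail bound with $z=(1-\delta)qh$, combined with a union bound over $\binom{n}{k}\binom{k}{u}\leq \mathrm{e}^{\alpha}$ ordered pairs, yields the failure-probability bound $g(p,h,u,d,n)\leq \mathrm{e}^{\alpha}\exp(-\delta^{2}qh/2)$.

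To push this strictly below $1$ it suffices that $\delta^{2}qh/2>\alpha$, and I would take $h=\lfloor 2\alpha/(\delta^{2}q)\rfloor+1$; the trailing ``$+1$'' supplies the strict slack and saves the factor $3/2$ over the choice $h=3\alpha/(\delta^{2}q)$ used in Theorem \ref{thr:improvedChenUpper}. Substituting the critical value $h=2\alpha/(\delta^{2}q)$ into $z=(1-\delta)qh$ gives $z=2\alpha(1-\delta)/\delta^{2}$, equivalently $z\delta^{2}+2\alpha\delta-2\alpha=0$. Since $z(\delta)=2\alpha(1-\delta)/\delta^{2}$ is strictly decreasing on $(0,1)$ and equals $4/\beta^{2}$ at $\delta=\beta=1-2/\alpha$, the constraint $\delta\leq\beta$ is equivalent to $z\geq 4/\beta^{2}$, which after accounting for integrality becomes exactly the hypothesis $z\geq 4/\beta^{2}+1$.

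For the strict comparison $h(n,d,u;z]<t(n,d,u;z]$, I would use the identity $2\alpha/\delta^{2}=z/(1-\delta)$ at the critical $\delta$ together with $1/q=(k/u)^{u}(k/d)^{d}$; after cancellation this reduces to $1/(1-\delta)<1+k(1+\ln(n/k+1))$, and since $\delta\leq\beta=1-2/\alpha$ it suffices to show $\alpha<2[1+k+k\ln(n/k+1)]$. The hypothesis $(d+u)^{2}/u\leq n$ gives $n/k\geq k/u$, hence $k\ln(n/k)\geq u\ln(k/u)$, and together with $k\geq u$ this collapses $\alpha=k+k\ln(n/k)+u+u\ln(k/u)$ to at most $2k+2k\ln(n/k)$, which is strictly less than $2[1+k+k\ln(n/k+1)]$.

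The main obstacle I anticipate is not any single algebraic step but the bookkeeping needed to keep the three ingredients synchronized: the floor-plus-one rounding in $h$, the integer constraint $z\geq 4/\beta^{2}+1$, and the strict comparison with $t(n,d,u;z]$ must all be calibrated against the \emph{same} critical $\delta$, and the ancillary hypothesis $(d+u)^{2}/u\leq n$ is precisely what lets the $u\ln(\mathrm{e}k/u)$ term hidden inside $\alpha$ be absorbed by the $k\ln(n/k+1)$ term inside $t(n,d,u;z]$.
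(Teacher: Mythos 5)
Your proposal is correct and follows essentially the same route as the paper: the same Bernoulli construction with $p=u/k$, the same choice $h=\lfloor 2\alpha/(\delta^2 q)\rfloor+1$, the same inversion of $z=2\alpha(1-\delta)/\delta^2$ to equate $\delta\leq\beta$ with $z\geq 4/\beta^2$ (plus integrality), and the same use of $(d+u)^2/u\leq n$ to absorb the $u\ln(\mathrm{e}k/u)$ term. Your bookkeeping for the comparison $h<t$ (reducing to $\alpha<2[1+k+k\ln(n/k+1)]$ via $1/(1-\delta)\leq\alpha/2$) is a slightly cleaner but equivalent reorganization of the paper's inequalities.
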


\begin{proof}
By using the same construction and arguments in the proof in Theorem~\ref{thr:improvedChenUpper} until~\eqref{eqn:Exist3}, if we set 
\begin{align}
h &= h(n, d, u; z] \nonumber \\
&= \left\lfloor \frac{2}{\delta^2} \cdot \frac{1}{q} \cdot \left( k \ln{\frac{\mathrm{e} n}{k}} + u \ln{\frac{\mathrm{e} k}{u}}  \right) \right\rfloor + 1 \nonumber \\
&= \left\lfloor \frac{2}{\delta^2} \cdot \left( \frac{k}{u} \right)^u \left( \frac{k}{d} \right)^d \cdot \left( k \ln{\frac{\mathrm{e} n}{k}} + u \ln{\frac{\mathrm{e} k}{u}}  \right)  \right\rfloor + 1 \nonumber \\
&= O \left( \frac{1}{\delta^2} \cdot \left( \frac{k}{u} \right)^u \left( \frac{k}{d} \right)^d \cdot k \ln{\frac{n}{k}}  \right) \nonumber \\
&= O \left( \frac{1}{(1 - \delta) k \ln{\frac{n}{k}} } \cdot z \left( \frac{k}{u} \right)^u \left( \frac{k}{d} \right)^d \cdot k \ln{\frac{n}{k}}  \right) \label{eqn:z1} \\
&= O \left( \frac{1}{(1 - \delta) k \ln{\frac{n}{k}} } \right) \cdot t(n, d, u; z], \nonumber
\end{align}
then $g(p, h, u, d, n) < 1$, where $t(n, d, u; z]$ is defined in~\eqref{eqn:ChenUpper}; i.e., there exists an $(n, d, u; z]$-disjunct matrix of size $h \times n$. Equation~\eqref{eqn:z1} is obtained because 
\begin{align}
&& &\frac{2(1 - \delta)}{\delta^2} \cdot \left( k \ln{\frac{\mathrm{e} n}{k}} + u \ln{\frac{\mathrm{e} k}{u}}  \right) \nonumber \\
&&\leq z &= (1 - \delta) q h \label{eq:z21} \\
&& &= (1 - \delta)q \left( \left\lfloor \frac{2}{\delta^2} \cdot \frac{1}{q} \cdot \left( k \ln{\frac{\mathrm{e} n}{k}} + u \ln{\frac{\mathrm{e} k}{u}}  \right) \right\rfloor + 1 \right) \nonumber \\
&& &= \Theta \left( \frac{1 - \delta}{\delta^2} \cdot k \ln{\frac{n}{k}} \right) \nonumber \\ 
&& &\leq \frac{2(1 - \delta)}{\delta^2} \cdot \left( k \ln{\frac{\mathrm{e} n}{k}} + u \ln{\frac{\mathrm{e} k}{u}}  \right) + 1. \label{eq:upperZ1}
\end{align}

We next prove that $h(n, d, u; z] < t(n, d, u; z]$ once $0 < \delta \leq 1 - \frac{2}{k \ln{\frac{\mathrm{e} n}{k}} + u \ln{\frac{\mathrm{e} k}{u}} }$. Indeed, we have
\begin{align}
&h(n, d, u; z] \nonumber \\
=& \left\lfloor \frac{2}{\delta^2} \cdot \frac{1}{q} \cdot \left( k \ln{\frac{\mathrm{e} n}{k}} + u \ln{\frac{\mathrm{e} k}{u}} \right) \right\rfloor + 1, \mbox{ where } \frac{1}{q} = \left( \frac{k}{u} \right)^u \left( \frac{k}{d} \right)^d \nonumber \\
\leq& \frac{2}{\delta^2} \cdot \frac{1}{q} \cdot \left( k \ln{\frac{\mathrm{e} n}{k}} + u \ln{\frac{\mathrm{e} k}{u}} \right) + 1 \nonumber \\
<& \frac{2}{\delta^2} \cdot \frac{1}{q} \cdot 2k \ln{\frac{n}{k}}. \label{eq:smaller:11}
\end{align}

This equation is attained because $k \ln{\frac{\mathrm{e} n}{k}} + u \ln{\frac{\mathrm{e} k}{u}} < 2k \ln{\frac{\mathrm{e} n}{k}}$ as $(d + u)^2/u \leq n$. On the other hand, we have
\begin{align}
&t(n, d, u; z] \nonumber \\
=& z \cdot \frac{1}{q} \cdot \left[ 1 + k \left(1 + \ln{\left( \frac{n}{k} + 1 \right)} \right) \right] \nonumber \\
>& z \cdot \frac{1}{q} \cdot k \ln{\frac{n}{k}} \nonumber \\
\geq& \frac{2(1 - \delta)}{\delta^2} \cdot \left( k \ln{\frac{\mathrm{e} n}{k}} + u \ln{\frac{\mathrm{e} k}{u}}  \right) \cdot \frac{1}{q} \cdot k \ln{\frac{n}{k}}, \label{eq:smaller:21}
\end{align}
which is derived from the condition in~\eqref{eq:z21}. Combining~\eqref{eq:smaller:21} and~\eqref{eq:smaller:11}, we always get $h(n, d, u; z] < t(n, d, u; z]$ if
\begin{align}
&& \frac{2}{\delta^2} \cdot \frac{1}{q} \cdot 2k \ln{\frac{n}{k}} \leq& \frac{2(1 - \delta)}{\delta^2} \cdot \left( k \ln{\frac{\mathrm{e} n}{k}} + u \ln{\frac{\mathrm{e} k}{u}} \right) \cdot \frac{1}{q} \cdot k \ln{\frac{n}{k}} \nonumber \\
&\Longleftrightarrow& \delta \leq& 1 - \frac{2}{k \ln{\frac{\mathrm{e} n}{k}} + u \ln{\frac{\mathrm{e} k}{u}} } = \beta. \nonumber
\end{align}

Since $0 < \delta \leq \beta = 1 - 2/\alpha$, the quantity $2(1 - \delta)/\delta^2 \cdot \alpha$ goes from $4/\beta^2$ to infinity. Moreover, from~\eqref{eq:z21} and~\eqref{eq:upperZ1}, we have $z \in \left[ \frac{2(1 - \delta)}{\delta^2} \cdot \alpha, \frac{2(1 - \delta)}{\delta^2} \cdot \alpha + 1  \right]$, where $\alpha = k \ln{\frac{\mathrm{e} n}{k}} + u \ln{\frac{\mathrm{e} k}{u}}$. Therefore, $z$ can range from $\lceil 4/\beta^2 \rceil$ to $+\infty$. In other words, for any integer $z \geq 4/\beta^2 + 1$, we can find a corresponding $\delta$ in the interval $(0, \beta]$ such that $z = (1 - \delta) qh$.
\end{proof}

\section{Improved non-adaptive algorithms for threshold group testing with a gap}
\label{main:sec:improvedEncDec}

Here we present a reduced exact upper bound on the number of tests and a reduced asymptotic bound on the decoding time for identifying defective items in a noisy setting on test outcomes compared with the state-of-the-art work of Chen and Fu~\cite{chen2009nonadaptive}.

\subsection{First proposed algorithm}
\label{main:sub:improvedEncDec:first}

By using the construction of an $(n, d - \ell, u; z]$-disjunct matrix described in Section~\ref{main:sec:improvedDisjunct}, we can reduce the number of tests for encoding and the decoding time for decoding in TGT with a gap. From Chen and Fu's work~\cite{chen2009nonadaptive}, if we use the $(n, d - \ell, u; z]$-disjunct matrix described in Theorem~\ref{thr:improvedChenUpper} as the input to Algorithm~\ref{alg:decodingThreshold}, the following theorem is derived:

\begin{theorem}
\label{thr:improved0}
Let $\ell, 0 < g, 2 \leq u = \ell + g + 1 \leq d < k = d - \ell + u \leq n$ be integers with $(d + u)^2/u \leq n$. Set $\alpha = k \ln{\frac{\mathrm{e} n}{k}} + u \ln{\frac{\mathrm{e} k}{u}}$. Let $z$ be a positive integer and $S$ be the defective set with $|S| \leq d$. For an $(n, d, \ell, u)$-TGT model with at most $e = \lfloor (z - 1)/2 \rfloor$ erroneous outcomes, there exists a non-adaptive algorithm that successfully identifies some set $S^\prime$ with $|S^\prime \setminus S| \leq g$ and $|S \setminus S^\prime| \leq g$ using no more than $h(n, d - \ell, u; z]$ tests, where $h(n, d - \ell, u; z]$ is defined in~\eqref{eqn:improvedChenUpper}. Moreover, the decoding complexity is 
\begin{align}
O \left( h(n, d - \ell, u; z] \times u \left( \binom{n}{u} + (d - u) \binom{n - u}{g + 1} \binom{d - 1}{g} \binom{d}{u} \right) \right). \label{eq:improved1}
\end{align}
\end{theorem}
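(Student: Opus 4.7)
The plan is to combine the improved disjunct matrix construction from Theorem~\ref{thr:improvedChenUpper} with the existing decoding framework of Chen and Fu's Algorithm~\ref{alg:decodingThreshold} (whose correctness and precise complexity are captured in Theorem~\ref{thr:corrected}). Since the theorem does not claim any new decoding primitive, the proof is essentially a substitution argument: swap the old upper bound $t(n, d-\ell, u; z]$ on the number of rows of an $(n, d-\ell, u; z]$-disjunct matrix for the new (smaller) bound $h(n, d-\ell, u; z]$ everywhere it appears in the analysis of Algorithm~\ref{alg:decodingThreshold}.

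Concretely, I would proceed in the following steps. First, invoke Theorem~\ref{thr:improvedChenUpper} with parameters $(n, d-\ell, u; z]$ to obtain a binary matrix $\cM$ with $h(n, d-\ell, u; z]$ rows which is $(n, d-\ell, u; z]$-disjunct; the hypothesis $2 \leq u \leq d < k = d-\ell+u \leq n$ and $(d+u)^2/u \leq n$ exactly matches the hypothesis of Theorem~\ref{thr:improvedChenUpper} (with $d$ there replaced by $d-\ell$). Second, use this $\cM$ as the measurement matrix and run Algorithm~\ref{alg:decodingThreshold} on the outcome vector $\bY$. The correctness guarantee $|S' \setminus S| \leq g$ and $|S \setminus S'| \leq g$, even with up to $e = \lfloor (z-1)/2 \rfloor$ erroneous outcomes, is already established in Theorem~\ref{thr:corrected} (equivalently \cite[Theorem 4.4]{chen2009nonadaptive}); that argument only uses the $(n, d-\ell, u; z]$-disjunctness of the input matrix and is agnostic to how the matrix was built, so it transfers verbatim.

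For the decoding complexity, I would revisit the two-step cost analysis of Theorem~\ref{thr:corrected}. Step~\ref{alg:createHypergraph} scans all $u$-subsets of $[n]$ and, for each, tallies the number of negative tests in which all $u$ columns appear; its cost is $O(u \binom{n}{u})$ per row scan, giving a total of $O\bigl(h(n,d-\ell,u;z] \cdot u \binom{n}{u}\bigr)$ after substituting the new row count. Step~\ref{alg:getDefectiveSet} iteratively extends $S_i$ to $S_{i+1}$ at most $d-u$ times, and each extension examines $\binom{n-u}{g+1}\binom{d-1}{g}\binom{d}{u}$ candidate pairs $(A_i, B_i)$, each of which requires verifying $u$-completeness of $(S_i \cup A_i) \setminus B_i$ in the hypergraph $\mathbb{H}$; feeding this verification back through the disjunctness-based membership tests yields the second summand $O\bigl(h(n,d-\ell,u;z] \cdot u \cdot (d-u) \binom{n-u}{g+1}\binom{d-1}{g}\binom{d}{u}\bigr)$. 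Summing the two gives exactly~\eqref{eq:improved1}.

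I do not expect a serious obstacle, since both ingredients are already in place; the only care needed is to verify that the correctness proof of Algorithm~\ref{alg:decodingThreshold} uses the disjunctness property as a black box and does not implicitly depend on the specific construction of $\cM$ (e.g., the hypergraph-based construction of Chen et al.). A quick inspection of the argument in Section~\ref{sec:ReviewAndImprovement} confirms this: the key combinatorial fact used is that for any pair of disjoint subsets $S_1, S_2$ of the relevant sizes there are at least $z$ rows isolating $S_2$ from $S_1$, which is precisely Definition~\ref{def:threshDisjunct}. Hence the improved bound $h(n,d-\ell,u;z]$ on the number of rows may be plugged in directly, and the claimed complexity~\eqref{eq:improved1} follows.
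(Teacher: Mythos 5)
Your proposal is correct and matches the paper's own derivation: the paper obtains Theorem~\ref{thr:improved0} exactly by feeding the $(n, d-\ell, u; z]$-disjunct matrix of Theorem~\ref{thr:improvedChenUpper} into Algorithm~\ref{alg:decodingThreshold} and reusing the correctness and (corrected) complexity analysis of Theorem~\ref{thr:corrected}, with $h(n,d-\ell,u;z]$ substituted for $t(n,d-\ell,u;z]$. Your additional check that the decoding argument uses disjunctness only as a black box is the right thing to verify and is consistent with the paper.
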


\subsection{Second proposed algorithm}
\label{main:sub:improvedEncDec:second}

We can see that the complexity of the decoding algorithm in the theorem above remains relatively high due to the second operator in~\eqref{eq:improved1}. To reduce the decoding complexity, one can relax the conditions on $|S^\prime \setminus S|$ but keep the same condition on $|S \setminus S^\prime|$. In other words, we accept more false positives while keeping the same condition on the maximum number of false negatives.

The main idea is to reduce the redundancy of $u$-subsets created by the $(n, d - \ell, u; z]$-disjunct matrix in Algorithm~\ref{alg:decodingThreshold}. Since every $u$-subset $X^+ \subseteq S$ must be in $\sfF$, the total number of such $X^+$ is $\binom{|S|}{u}$. In fact, we need only up to $\zeta = \left\lfloor \frac{|S|}{u} \right\rfloor$ disjoint $u$-subsets $X^+$s in $\sfF$ to form $S$ if $|S|$ is divisible by $u$. Therefore, we can use a simple procedure, i.e., collect $\zeta$ disjoint $u$-subsets in $\sfF$, to form $S$. However, it is uncertain whether each $u$-subset we collected is truly a $u$-subset of $S$ because it may contain only $\ell + 1$ defective items. Moreover, $|S|$ may not be divisible by $u$. As a result, the set formed, says $S^\prime$, may not be identical to $S$. To remedy this drawback, we propose adding one more step: add the remaining defective items in $S \setminus S^\prime$ into $S^\prime$ until $S^\prime$ is not extendible.

The above strategy is formalized in the following theorem which is associated with Algorithm~\ref{alg:improvedAlgorithm}.

\begin{algorithm}
\caption{$\mathrm{Decoding}_2(\bY, \cM)$: Decoding procedure for non-adaptive $(n, d, \ell, u)$-TGT with up to $e$ erroneous outcomes.}
\label{alg:improvedAlgorithm}
\textbf{Input:} Outcome vector $\bY$, an $(n, d - \ell, u; z = 2e + 1]$-disjunct matrix $\cM$.\\
\textbf{Output:} Set of defective items $S^\prime$ s.t. $|S^\prime \setminus S| \leq \left( \left\lfloor \frac{|S|}{\ell + 1} \right\rfloor + u - 1 \right) g$ and $|S \setminus S^\prime| \leq g$.

\begin{algorithmic}[1]
\State Construct a family $\sfF$ such that a $u$-subset $X \subseteq [n]$ is an edge in $\sfF$ iff $t_0^{\cM}(X) \leq e$, where $t_0^{\cM}(X)$ is the number of negative pools in which all columns in $X$ appear when using $\cM$ as a measurement matrix. \label{alg:improvedAlgorithm:findFamily}
\State We first want to establish increasing vertex-sets $S_i$'s, $|S_1| < |S_2| \ldots < |S_{r}|$, such that $S_{i + 1}$ contains exactly $u$ items more than $S_i$. As an initial $S_1$, we select all $u$ vertices of an arbitrary edge. To find $S_{i + 1}$ for $i \geq 1$, we check all possible cases to attain some $u$-subset $A_i \in \sfF \setminus \{A_1, \ldots, A_{i - 1} \}$ such that $|S_i \cup A_i| = |S_i| + u$. If $A_i$ exists, then set $S_{i + 1} = S_i \cup A_i$. This process is continued until $S_r$ is not extendible. \label{alg:improvedAlgorithm:getDefectiveSetX1}
\State We then want to establish increasing vertex-sets $S_i$'s, $|S_{r + 1}| < |S_{r + 2}| \ldots < |S_m|$, such that $S_{i + 1}$ contains at least one defective item more than $S_i$. To find $S_{i + 1}$ for $i \geq r$, we check all possible cases to attain some $u$-subset $A_i \in \sfF \setminus \{A_1, \ldots, A_{i - 1} \}$ such that $|S_i \cup A_i| \geq |S_i| + g + 1$. If $A_i$ exists, then set $S_{i + 1} = S_i \cup A_i$. This process is continued until $S_m$ is not extendible. Output set $S^\prime = S_m$. \label{alg:improvedAlgorithm:getDefectiveSetX2}
\end{algorithmic}
\end{algorithm}

\begin{theorem}
\label{thr:mainImprovement}
Let $\ell, 0 < g, 2 \leq u = \ell + g + 1 < d < k = d - \ell + u \leq n$ be integers with $\mathrm{e}^2 (d + u)^2/u \leq n$. Set $\alpha = k \ln{\frac{\mathrm{e} n}{k}} + u \ln{\frac{\mathrm{e} k}{u}}$. Let $z$ be a positive integer and $S$ be the defective set with $|S| \leq d$. For an $(n, d, \ell, u)$-TGT model with at most $e = \lfloor (z - 1)/2 \rfloor$ erroneous outcomes, there exists a non-adaptive algorithm that successfully identifies some set $S^\prime$ with $|S^\prime \setminus S| \leq \left( \left\lfloor \frac{|S|}{\ell + 1} \right\rfloor + u - 1 \right) g \leq \left( \frac{d}{\ell + 1} + u - 1 \right) g$ and $|S \setminus S^\prime| \leq g$ using no more than $h(n, d - \ell, u; z]$ tests, where $h(n, d - \ell, u; z]$ is defined in~\eqref{eqn:improvedChenUpper}. Moreover, the decoding complexity is
\begin{equation}
O \left( h(n, d - \ell, u; z] \cdot u\binom{n}{u} \right). \nonumber
\end{equation}
\end{theorem}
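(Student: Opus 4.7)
The proof will combine the disjunct-matrix existence result of Theorem~\ref{thr:improvedChenUpper} with a structural analysis of Algorithm~\ref{alg:improvedAlgorithm}, organized around three main ingredients: characterize the family $\sfF$, bound the two extension phases, and account for the decoding cost. My first step is to show that $\sfF$ contains every $u$-subset of $S$ and excludes every $u$-subset $X$ with $|X\cap S|\leq\ell$. Inclusion is immediate: if $X\subseteq S$, every row in which $X$ appears has $|P\cap S|\geq u$ and is truly positive, so the at most $e$ flips leave $t_0^{\cM}(X)\leq e$. For the exclusion I will invoke the $(n,d-\ell,u;z]$-disjunct property with $S_2=X$ and $S_1$ a $(d-\ell)$-subset of $[n]\setminus X$ chosen to absorb as much of $S\setminus X$ as possible: if $|S\setminus X|\leq d-\ell$, pad $S_1\supseteq S\setminus X$ with items from $[n]\setminus(S\cup X)$, otherwise take $S_1\subseteq S\setminus X$. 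A direct count shows that in each of the $\geq z$ guaranteed rows we have $|P\cap S|\leq\ell$, so the true outcome is negative; at most $e$ errors leave $\geq e+1$ observed negatives, hence $t_0^{\cM}(X)>e$ and $X\notin\sfF$.

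Next, I will track the two extension phases. In Step~2, each $A_i\in\sfF$ is disjoint from $S_i$ and contributes at least $\ell+1$ fresh defectives, so after $r$ iterations $r(\ell+1)\leq|S\cap S_r|\leq|S|$, giving $r\leq\lfloor|S|/(\ell+1)\rfloor$. Moreover Step~2 can only terminate when $|S\setminus S_r|\leq u-1$, since otherwise any $u$-subset of $S\setminus S_r$ lies in $\sfF$ and extends the chain. In Step~3 the condition $|A_i\setminus S_i|\geq g+1$ forces $|A_i\cap S_i|\leq\ell$, and combined with $|A_i\cap S|\geq\ell+1$ this guarantees at least one new defective per iteration, so the number of Step~3 iterations is at most $|S|-|S_r\cap S|\leq u-1$. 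Consequently the total number of $u$-subsets used is $m\leq\lfloor|S|/(\ell+1)\rfloor+u-1$, and since each subset contributes at most $u-(\ell+1)=g$ non-defectives, $|S'\setminus S|\leq mg\leq(\lfloor|S|/(\ell+1)\rfloor+u-1)g$.

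The hard part is the matching bound $|S\setminus S'|\leq g$, which I will prove by contradiction. Suppose $|S\setminus S_m|\geq g+1$; I will exhibit some $A\in\sfF$ with $|A\setminus S_m|\geq g+1$, contradicting termination of Step~3. If $|S\setminus S_m|\geq u$, take $A$ to be any $u$-subset of $S\setminus S_m$. Otherwise, write $A=(S\setminus S_m)\cup B$ with $|A|=u$, where $B$ contains $\max(\ell+1-|S\setminus S_m|,0)$ defectives drawn from $S\cap S_m$ (feasible since $|S|\geq u\geq\ell+2$) together with non-defectives from $[n]\setminus(S\cup S_m)$. Then $|A\cap S|\geq\ell+1$, so $A\in\sfF$, while a direct count yields $|A\setminus S_m|\geq|S\setminus S_m|+g\geq g+1$, the required contradiction. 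The main obstacle here is simultaneously controlling the defective count (to keep $A$ inside $\sfF$) and the new-item count (to violate termination), across the different sub-cases.

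Finally, Theorem~\ref{thr:improvedChenUpper} supplies a measurement matrix with $h(n,d-\ell,u;z]$ rows, which is the claimed number of tests. For the decoding time, Step~1 of Algorithm~\ref{alg:improvedAlgorithm} enumerates every $u$-subset $X\subseteq[n]$ and computes $t_0^{\cM}(X)$ in $O(u)$ work per row of $\cM$, giving an overall cost of $O\left(h(n,d-\ell,u;z]\cdot u\binom{n}{u}\right)$; Steps~2 and~3 perform only $O(u)$ extensions each, scanning $\sfF$, and are easily absorbed into this bound, completing the proof.
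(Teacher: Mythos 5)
Your correctness argument follows the paper's proof almost step for step: the same characterization of $\sfF$ (which the paper delegates to Lemma~4.1 of Chen and Fu and you prove directly from the disjunct property --- a worthwhile addition), the same bound of $\lfloor|S|/(\ell+1)\rfloor$ iterations for Step~2 and $u-1$ for Step~3, and essentially the same contradiction establishing $|S\setminus S'|\leq g$ (the paper builds the witness as $A_m=C\cup D$ with $C$ a $(g+1)$-subset of $S\setminus S_m$ and $D$ a further $\ell$ defectives; your padded set $(S\setminus S_m)\cup B$ plays the same role, and already $A\supseteq S\setminus S_m$ gives $|A\setminus S_m|\geq g+1$ without the extra counting). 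That part is sound.

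The gap is in the decoding complexity. First, your claim that Steps~2 and~3 perform ``only $O(u)$ extensions each'' contradicts your own earlier (correct) count: Step~2 can run $\lfloor|S|/(\ell+1)\rfloor$ times, which is $\Theta(d)$ when $\ell=O(1)$, not $O(u)$. Second, and more importantly, the assertion that these extensions are ``easily absorbed'' into $O\left(h(n,d-\ell,u;z]\cdot u\binom{n}{u}\right)$ is exactly the nontrivial part of the paper's complexity proof. Each extension scans the whole family $\sfF$, which can have up to $\binom{n}{u}$ edges, and testing a candidate $A_i$ against $S_i$ costs on the order of $u|S_i|$, so Steps~2--3 cost roughly $us(gs+d)\binom{n}{u}$ with $s=\lfloor|S|/(\ell+1)\rfloor+u-1\leq d+u$; this is a $\mathrm{poly}(d,u)\cdot\binom{n}{u}$ term that is not a priori dominated by $h\cdot u\binom{n}{u}$. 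The paper devotes the chain of inequalities~\eqref{eq:cmp1}--\eqref{eq:cmp3} (using Bernoulli's inequality and $d\geq u+1$) to showing that this term is at most $h(n,d-\ell,u;z]\cdot u\binom{n}{u}$, and this is precisely where the strengthened hypothesis $\mathrm{e}^2(d+u)^2/u\leq n$ (which guarantees $\ln(n/k)\geq 2$ and hence $h\geq 4(g+1)(1+\tfrac{d-\ell}{u})^u(1+\tfrac{u}{d-\ell})^{d-\ell}$) enters --- a hypothesis your proof never invokes. You need to either reproduce that comparison or otherwise argue that $h$ dominates the $\mathrm{poly}(d,u)$ overhead of the extension phases.
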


The proof of this theorem is divided into two parts: correctness and decoding complexity. However, we first present visualizations that convey the essence of Algorithm~\ref{alg:improvedAlgorithm}.

\subsubsection{Visualization}

Steps~\ref{alg:improvedAlgorithm:getDefectiveSetX1} and~\ref{alg:improvedAlgorithm:getDefectiveSetX2} of Algorithm~\ref{alg:improvedAlgorithm} are depicted in Fig.~\ref{fig:Alg.2} for $n = 49, g = 7, u = 10, l = u - g - 1 = 2, d = 17$, and $|S| = 17$. There are many $u$-subsets belonging to $\sfF$, but we depict only five of them here. Step~\ref{alg:improvedAlgorithm:getDefectiveSetX1} proceeds as shown in the upper five images as follows. Subset $S_1$, containing $10$ defectives, selected as the initial subset. Scanning every subset of $\sfF$ reveals that $A_1$ is a subset such that $|S_1 \cup A_1| = |S_1| + |A_1| = 20$. Set $S_2 = S_1 \cup A_1$. The process continues until $S_r$ consists of $13$ defectives and $7$ negatives. Since there are no $u$-subsets $A_{r}$'s in $\sfF \setminus \{A_1, \ldots, A_{r - 1} \}$ such that $|S_r \cup A_r| = |S_r| + |A_r|$, $S_r$ is not extendible.

Step~\ref{alg:improvedAlgorithm:getDefectiveSetX2} proceeds as shown in the lower four images. Starting with subset $S_r$, we try to find a $u$-subset $A_{r}$'s in $\sfF \setminus \{A_1, \ldots, A_{r - 1} \}$ such that $|S_r \cup A_r| = |S_r| + g + 1$. If $A_r$ exists, a new subset $S_{r + 1} = S_r \cup A_r$ is attained. This process is repeated until there are no $u$-subsets $A_{m}$'s in $\sfF \setminus \{A_1, \ldots, A_{m - 1} \}$ such that $|S_m \cup A_m| \geq |S_m| + g + 1$. In other words, $S_m$ is not extendible. The algorithm terminates and $S^\prime = S_m$ is attained.

\begin{figure*}[t]
\begin{subfigure}{\textwidth}
\centering
  \includegraphics[scale=0.5]{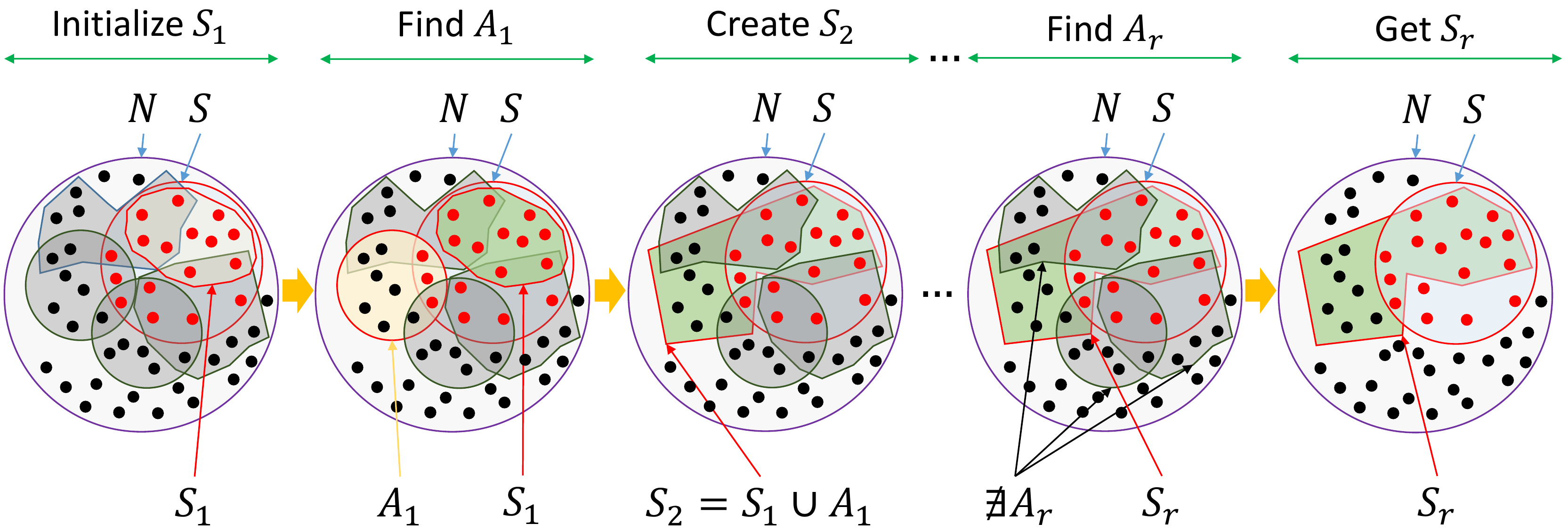}
  \caption{Step~\ref{alg:improvedAlgorithm:getDefectiveSetX1}}
  \label{fig:sub-first}
\end{subfigure}\\
\begin{subfigure}{\textwidth}
  \centering
  \includegraphics[scale=0.5]{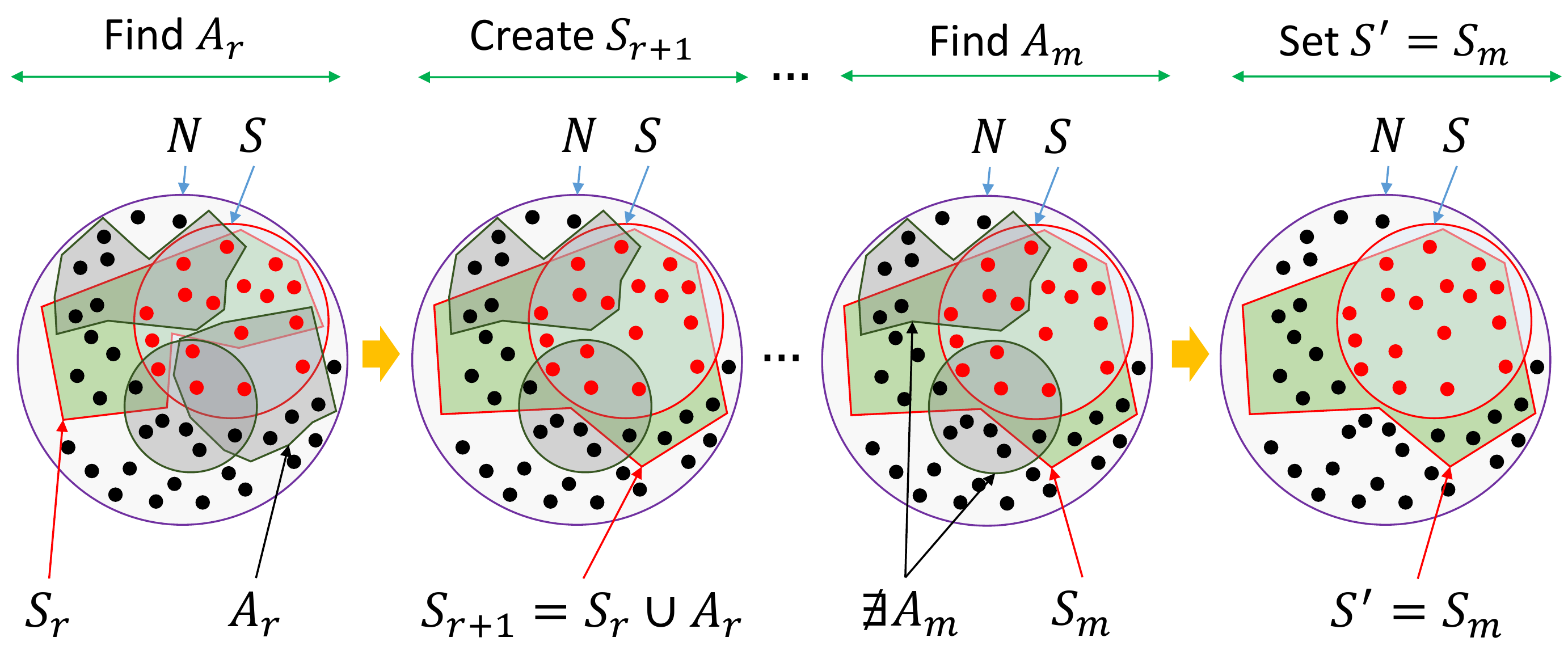}
  \caption{Step~\ref{alg:improvedAlgorithm:getDefectiveSetX2}}
  \label{fig:sub-second}
\end{subfigure}

\caption{Illustration of finding an approximate defective set $S^\prime$ of defective set $S$ such that $|S^\prime \setminus S| \leq \left( \left\lfloor \frac{|S|}{\ell + 1} \right\rfloor + u - 1 \right) g$ and $|S \setminus S^\prime| \leq g$ with $g = 7, u = 10$, and $\ell = u - g - 1 = 2$ for Algorithm~\ref{alg:improvedAlgorithm}.}
\label{fig:Alg.2}
\end{figure*}

\subsubsection{Correctness}
\label{sub:main:improvedEncDec:correct}

To prove the correctness of Algorithm~\ref{alg:improvedAlgorithm}, we first prove that after Step~\ref{alg:improvedAlgorithm:findFamily}, for every $u$-subset $X \in \sfF$, $X$ contains no more than $g$ items not in $S$, i.e., $|X \cap S| \geq \ell + 1$. Moreover, every $u$-subset $X^+ \subseteq S$ is in $\sfF$. Since the proof is identical to the proof of Lemma 4.1 in~\cite{chen2009nonadaptive}, we omit it here.

Since every $u$-subset $X^+ \subseteq S$ must be in $\sfF$, there exists $\left\lfloor \frac{|S|}{u} \right\rfloor \leq \zeta \leq \left\lceil \frac{|S|}{u} \right\rceil$ disjoint $u$-subsets $X_1^+, \ldots, X_\zeta^+$ in $\sfF$ such that $|S \setminus \cup_{j = 1}^\zeta X_j^+| \leq u - 1$.

In Step~\ref{alg:improvedAlgorithm:getDefectiveSetX1}, if another disjoint $u$-subset $A_1 \in \sfF$ ($|S_1 \cap A_1| = 0$) is found, set $S_2 = S_1 \cup A_1$. In general, to find $S_{i + 1}$ for $i \geq 1$, all possible cases are checked to attain some $u$-subset $A_i \in \sfF \setminus \{A_1, \ldots, A_{i - 1} \}$ such that $|S_i \cup A_i| = |S_i| + |A_i| = |S_i| + u$. If $A_i$ exists, i.e., $S_i$ is extendible, set $S_{i + 1} = S_i \cup A_i$. On the other hand, if $S_r$ is not extendible ($A_i$ does not exist), we can infer that $|S \setminus S_r| \leq u - 1$. We assume that $|S \setminus S_r| \geq u$. Select $A_r \subseteq S \setminus S_r$ with $|A_r| = u$. Since $A_r \subseteq S \setminus S_r \in \sfF$ and $|A_r \cap S_r| = 0$, we get $|S_r \cup A_r| = |S_r| + |A_r| = |S_r| + u$; i.e., $S_r$ is extendible. This contradicts the assumption.

There is a special case that if $|S \setminus S_r| \leq \ell$, this process stops. If $|S \setminus S_r| \leq \ell$, for any $A_r \in \sfF \setminus \{A_1, \ldots, A_{r - 1} \}$, we have $|A_r \cap S_r| \geq 1$ because $|A_r \cap S| \geq \ell + 1$. Therefore, there does not exist $A_r \in \sfF$ such that $|S_r \cup A_r| = |S_r| + u$ because $|S_r \cup A_r| = |S_r| + |A_r| - |S_r \cap A_r| \leq |S_r| + u - 1 < |S_r| + u$.

Because each $A_i$ can contain exactly $\ell + 1$ defectives in the worst case, Step~\ref{alg:improvedAlgorithm:getDefectiveSetX1} can run up to $\left\lfloor \frac{|S|}{\ell + 1} \right\rfloor$ times. 
 
We now consider Step~\ref{alg:improvedAlgorithm:getDefectiveSetX2}. Subset $S_m$ is not extendible iff there does not exist a $u$-subset $A_m \in \sfF \setminus \{A_1, \ldots, A_{m - 1} \}$ such that $|S_m \cup A_m| \geq |S_m| + g + 1$. We then must have $|S \setminus S_m| \leq g$. Indeed, let us assume that $|S \setminus S_m| \geq g + 1$. Select $C \subseteq S \setminus S_m$ with $|S| = g + 1$ and $D \subseteq S \setminus C$ with $|D| = \ell$. Such a pair $C, D$ always exists because $|S| \geq u = g + 1 + \ell$. Set $A_m = C \cup D$. Therefore, $|S_m \cup A_m| \geq |S_m| + g + 1$ and $A_m \in \sfF$. Hence, $S_m$ is extendible, which contradicts the assumption that $S_m$ is not extendible.

We have $|S \setminus S_r| \leq u - 1$ after running Step~\ref{alg:improvedAlgorithm:getDefectiveSetX1}. It follows that Step~\ref{alg:improvedAlgorithm:getDefectiveSetX2} runs at most $(u - 1)$ times, i.e., $m - r \leq u - 1$, because $S_i$ adds at least one defective for each iteration of Step~\ref{alg:improvedAlgorithm:getDefectiveSetX2}.

In summary, Steps~\ref{alg:improvedAlgorithm:getDefectiveSetX1} and~\ref{alg:improvedAlgorithm:getDefectiveSetX2} run up to $\left\lfloor \frac{|S|}{\ell + 1} \right\rfloor$ and $(u - 1)$ times, respectively. Because the subset considered at each iteration adds a $u$-subset having at least $\ell + 1$ defectives and up to $g$ negatives, we have $|S^\prime \setminus S| \leq \left( \left\lfloor \frac{|S|}{\ell + 1} \right\rfloor + u - 1 \right) g$ and $|S \setminus S^\prime| \leq g$ when the algorithm terminates.

\subsubsection{Decoding complexity}

Step~\ref{alg:improvedAlgorithm:findFamily} takes $h(n, d - \ell, u; z] \cdot u\binom{n}{u}$ time. Since every $u$-subset in $\sfF$ has at least $\ell + 1$ defectives and up to $g = u - \ell + 1$ negatives, the maximum cardinality of $\sfF$ is:
\begin{equation}
f = \sum_{i = \ell + 1}^{u} \binom{|S|}{i} \binom{n - |S|}{u - i} < \sum_{i = 0}^{u} \binom{|S|}{i} \binom{n - |S|}{u - i} = \binom{n}{u}. \nonumber
\end{equation}

Because $|S^\prime \setminus S| \leq \left( \left\lfloor \frac{|S|}{\ell + 1} \right\rfloor + u - 1 \right) g$ and $|S \setminus S^\prime| \leq g$, we have $|S^\prime| \leq \left( \left\lfloor \frac{|S|}{\ell + 1} \right\rfloor + u - 1 \right)g + d$. Since we scan the family $\sfF$ up to $\left\lfloor \frac{|S|}{\ell + 1} \right\rfloor + (u - 1)$ times in both Steps~\ref{alg:improvedAlgorithm:getDefectiveSetX1} and~\ref{alg:improvedAlgorithm:getDefectiveSetX2}, $|\sfF| \leq f$, and $|S_i| \leq |S^\prime| \leq \left( \left\lfloor \frac{|S|}{\ell + 1} \right\rfloor + u - 1 \right) g + d$, the complexity of Algorithm~\ref{alg:improvedAlgorithm} is:
\begin{align}
&h(n, d - \ell, u; z] \cdot u\binom{n}{u} + \left( \left\lfloor \frac{|S|}{\ell + 1} \right\rfloor + u - 1 \right) \left( \left( \left\lfloor \frac{|S|}{\ell + 1} \right\rfloor + u - 1 \right) g + d \right) \times u f \nonumber \\
=& h(n, d - \ell, u; z] \cdot u\binom{n}{u} + us (gs + d) f, \label{eq:counting}
\end{align}
where $s = \left\lfloor \frac{|S|}{\ell + 1} \right\rfloor + (u - 1) \leq d + u$ and $k = d - \ell + u = d + g + 1$.

We have
\begin{align}
us (gs + d) f &\leq u ( d + u) (g(d + u) + d) \binom{n}{u} \nonumber \\
&< (d + u)^2 (g + 1) \cdot u \binom{n}{u}, \label{eq:cmp1}
\end{align}
and
\begin{align}
&h(n, d - \ell, u; z] \cdot u\binom{n}{u} \nonumber \\
\geq& \left( 1 + \frac{d - \ell}{u} \right)^u \left( 1 + \frac{u}{d - \ell} \right)^{d - \ell} (d + g + 1) \ln{\frac{n}{k}} \cdot u\binom{n}{u} \nonumber \\
\geq& 4(g + 1) \left( 1 + \frac{d - \ell}{u} \right)^u \left( 1 + \frac{u}{d - \ell} \right)^{d - \ell} \cdot u\binom{n}{u}, \label{eq:cmp2}
\end{align}
because $h(n, d - \ell, u; z] \geq \left( 1 + \frac{d - \ell}{u} \right)^u \left( 1 + \frac{u}{d - \ell} \right)^{d - \ell} (d + g + 1) \ln{\frac{n}{k}}$ as in Theorem~\ref{thr:improvedChenUpper}, $d \geq u \geq g + 1$ and $\ln{\frac{n}{k}} \geq 2$ ($n \geq \mathrm{e}^2 (d + u)^2/u > \mathrm{e}^2 (d - \ell + u)$). We next consider the following inequality:
\begin{align}
&& (d + u)^2 (g + 1) \cdot u \binom{n}{u} &\leq 4(g + 1) \left( 1 + \frac{d - \ell}{u} \right)^u \nonumber \times \left( 1 + \frac{u}{d - \ell} \right)^{d - \ell} \cdot u\binom{n}{u} & & \label{eq:cmp3} \\
&\Longleftrightarrow& d + u &\leq 2 \left( 1 + \frac{d - \ell}{u} \right)^{u/2} \left( 1 + \frac{u}{d - \ell} \right)^{(d - \ell)/2} \nonumber
\end{align}

For this inequality to hold, by using Bernoulli's inequality, it suffices that
\begin{align}
&& d + u &\leq 2 \left(  1 + \frac{d - \ell}{u} \times \frac{u}{2} \right) \left( 1 + \frac{u}{d - \ell} \times \frac{d - \ell}{2} \right) \nonumber \\
&& &\leq 2 \left( 1 + \frac{d - \ell}{u} \right)^{u/2} \left( 1 + \frac{u}{d - \ell} \right)^{(d - \ell)/2} \nonumber \\
&\Longleftrightarrow& d + u &\leq \frac{(d - \ell + 2)(u + 2)}{2} \nonumber \\
&& &\leq \frac{du}{2} + (d + u) + 2 - \frac{\ell (u + 2)}{2} \nonumber \\
&\Longleftrightarrow& \ell (u + 2) &\leq du + 4. \nonumber
\end{align}

The last inequality always holds because $\ell ( u + 2) \leq (u - 1) (u + 2) < u(u + 1) + 4 \leq du + 4$ for $d \geq u + 1$. Combining~\eqref{eq:cmp1},~\eqref{eq:cmp2}, and~\eqref{eq:cmp3}, we get
\begin{equation}
us (gs + d) f \leq h(n, d - \ell, u; z] \cdot u\binom{n}{u}, \nonumber
\end{equation}
for any $d \geq u + 1$ and $n \geq \mathrm{e}^2 (d + u)^2/u > \mathrm{e}^2 (d - \ell + u)$. Therefore, the decoding complexity of Algorithm~\ref{alg:improvedAlgorithm} is up to

\begin{equation}
h(n, d - \ell, u; z] \cdot 2u\binom{n}{u}. \nonumber
\end{equation}

\subsubsection{Example for Algorithm~\ref{alg:improvedAlgorithm}}
\label{subsub:improvedEncDec:eg2}

We demonstrate Algorithm~\ref{alg:improvedAlgorithm} by the same settings used to demonstrate Algorithm~\ref{alg:decodingThreshold} (Section~\ref{sub:review:eg}): $n = 6, d = 4, \ell = 0, u = 2, g = u - \ell - 1 = 1, z = 1, e = 0$, and $S = \{1, 2, 4, 5 \}$. Input vector $\bX$, $(n = 6, d - \ell = 4, u = 2; z = 1]$-disjunct matrix $\cM$, and outcome vector $\bY$ are as in~\eqref{eqn:exChenFu}. Note that the condition $\mathrm{e}^2 (d + u)^2/u \leq n$ does not hold though Algorithm~\ref{alg:improvedAlgorithm} still works well for this example.

Algorithm~\ref{alg:improvedAlgorithm} proceeds as follows. In Step~\ref{alg:improvedAlgorithm:findFamily}, a family $\sfF$ of $2$-subsets $X \subseteq [n]$ is constructed such that $t_0^{\cM}(X) \leq e = 0$. From~\eqref{eqn:exChenFu}, we get $\sfF = \{ \{1, 2\}, \{1, 4\}, \{1, 5\}, \{2, 3\}, \{2, 4\}, \{2, 5\}, \{3, 5\}, \{4, 5\}, \{5, 6\} \}$.

In Step~\ref{alg:improvedAlgorithm:getDefectiveSetX1}, an initial $2$-subset $S_1$ belonging to $\sfF$ is arbitrarily chosen. Without loss of generality, set $S_1 = \{1, 2 \}$. The next phase is to find a $2$-subset $A_1 \in \sfF$ such that $|S_1 \cup A_1| = |S_1| + u = 2 + 2 = 4$. By exhausted searching in $\sfF$, one of candidates is $\{3, 5 \}$. Set $A_1 = \{3, 5 \}$ and $S_2 = S_1 \cup A_1 = \{1, 2, 3, 5 \}$. Because there does not exist any $A_2 \in \sfF \setminus \{A_1 \}$ such that $|S_2 \cup A_2| = |S_2| + 2$, Step~\ref{alg:improvedAlgorithm:getDefectiveSetX1} stops here because $S_2$ is not extendible.

Since set $S_2$ returned by Step~\ref{alg:improvedAlgorithm:getDefectiveSetX1} may not contain some of the defective items when $|S \setminus S_2| > g$, Step~\ref{alg:improvedAlgorithm:getDefectiveSetX2} exhaustively searches for them in order to produce the final approximate set $S^\prime$, where $|S \setminus S^\prime| \leq g$. It searches for a $2$-subset $A_2$ in $\sfF \setminus \{A_1 \}$ such that $|S_2 \cup A_2| = |S_2| + g + 1 = 3 + 1 + 1 = 5$. Luckily, such an $A_2$ does not exist, $S^\prime = S_2 = \{1, 2, 3, 5 \}$ is output.

Note that the approximate defective set $S^\prime$ is not identical to $S$ as it is in Section~\ref{sub:review:eg}. However, set $S^\prime$ is \textit{indistinguishable} from $S$ because $|S \setminus S^\prime| = 1 = g \leq g$ and $|S^\prime \setminus S| = 1 = g \leq g$~\cite{damaschke2006threshold}.

\subsection{Third proposed algorithm}
\label{sub:variant}

Our main idea here is to combine Algorithms~\ref{alg:decodingThreshold} and~\ref{alg:improvedAlgorithm}. It is obvious that $|S^\prime \setminus S| \leq \left( \left\lfloor \frac{|S|}{\ell + 1} \right\rfloor + u - 1 \right) g$ in Theorem~\ref{thr:mainImprovement}, which is worse than the condition $|S^\prime \setminus S| \leq g$ in Theorem~\ref{thr:improved0}. Theorem~\ref{thr:mainImprovement} can be improved to achieve the conditions $|S^\prime \setminus S| \leq g$ and $|S \setminus S^\prime| \leq 2g$ by using the outcome of Algorithm~\ref{alg:improvedAlgorithm} as the input of Algorithm~\ref{alg:decodingThreshold}. An extension of Algorithm~\ref{alg:improvedAlgorithm} is described in Algorithm~\ref{alg:improvedAlgorithm2}. The decoding complexity of the improved algorithm is higher than that in Theorem~\ref{thr:mainImprovement} but lower than that in Theorem~\ref{thr:improved0}. The conditions on $|S^\prime \setminus S|$ and $|S \setminus S^\prime|$, i.e., the number of false positives and the number of false negatives, are respectively looser than and equal to the corresponding ones in Theorem~\ref{thr:mainImprovement}. On the other hand, the conditions on $|S^\prime \setminus S|$ and $|S \setminus S^\prime|$ are equal to and tighter than the corresponding ones in Theorem~\ref{thr:improved0}. These comparisons are summarized in Table~\ref{buice.t1}.

\begin{algorithm}
\caption{$\mathrm{Decoding}_3(\bY, \cM)$: Decoding procedure for non-adaptive $(n, d, \ell, u)$-TGT with up to $e$ erroneous outcomes.}
\label{alg:improvedAlgorithm2}
\textbf{Input:} Outcome vector $\bY$, a $(d - \ell, u; z = 2e + 1]$-disjunct matrix $\cM$.\\
\textbf{Output:} Set of defective items $S^\prime$ s.t. $|S^\prime \setminus S| \leq g$ and $|S \setminus S^\prime| \leq 2g$.

\begin{algorithmic}[1]
\State Set $V = \mathrm{Decoding}_2(\bY, \cM)$. \label{alg:improvedAlgorithm2:findDefectiveSet}
\State Construct hypergraph $\mathbb{H} = (V, \sfF)$ where a $u$-subset $X \subseteq V$ is an edge in $\sfF$ iff $t_0^{\cM}(X) \leq e$, where $t_0^{\cM}(X)$ is the number of negative pools in which all columns in $X$ appear when using $\cM$ as a measurement matrix. \label{alg:improvedAlgorithm2:findFamily}
\State We want to establish increasing vertex-sets $S_i$'s, $|S_1| < |S_2| \ldots < |S_m|$ such that hypergraph $\mathbb{H}$ is $u$-complete with respect to each $S_i$. As an initial $S_1$, we can select all $u$ vertices of an arbitrary edge. To find $S_{i + 1}$ for $i \geq 1$, we check all possible cases to attain some $(g + 1)$-subset $A_i$ in $V(\mathbb{H}) \setminus S_i$ and a $g$-subset $B_i$ in $S_i$ such that $\mathbb{H}$ is $u$-complete with respect to $(S_i \cup A) \setminus B$. If such a pair $A_i, B_i$ exists, set $S_{i + 1} = (S_i \cup A_i) \setminus B_i$. This process is continued until either $S_m$ is not extendable or $|S_i| \geq d$. Output the set $S^\prime = S_m$. \label{alg:improvedAlgorithm2:getDefectiveSet}
\end{algorithmic}
\end{algorithm}

The set $S^\prime$ attained from Algorithm~\ref{alg:improvedAlgorithm2} satisfies two properties: $|S \setminus S^\prime| \leq 2g$ and $|S^\prime \setminus S| \leq g$. This can be interpreted to mean that the number of defective items in $S^\prime$, i.e., $|S^\prime \cap S| \geq |S| - 2g$, is at least $|S| - 2g$ . We summarize this result as follows.

\begin{theorem}
\label{thr:mainImprovement2}
Let $\ell, 0 < g, 2 \leq u = \ell + g + 1 < d < k = d - \ell + u \leq n$ be integers with $\mathrm{e}^2 (d + u)^2/u \leq n$. Let $z$ be a positive integer and $S$ be the defective set with $|S| \leq d$. Set $w = \left( \left\lfloor \frac{|S|}{\ell + 1} \right\rfloor + u - 1 \right) g$ and $w + d \leq n$. For an $(n, d, \ell, u)$-TGT model with at most $e = \lfloor (z - 1)/2 \rfloor$ erroneous outcomes, there exists a non-adaptive algorithm that successfully identifies some set $S^\prime$ with $|S^\prime \setminus S| \leq g$ and $|S \setminus S^\prime| \leq 2g$ using no more than $h(n, d - \ell, u; z]$ tests, where $h(n, d - \ell, u; z]$ is defined in~\eqref{eqn:improvedChenUpper}. Moreover, the decoding complexity is 
\begin{align}
O \left( h(n, d - \ell, u; z] \cdot u \cdot \left( \binom{n}{u} + (d - u) \binom{w + d - u}{g + 1} \binom{d - 1}{g} \binom{d}{u} \right) \right). \label{eq:improved2}
\end{align}
\end{theorem}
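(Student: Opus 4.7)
The plan is to view Algorithm~\ref{alg:improvedAlgorithm2} as a two-stage reduction that first shrinks the universe from $[n]$ to a much smaller set $V$ and then runs Chen--Fu's Algorithm~\ref{alg:decodingThreshold} on that reduced universe. First I would invoke Theorem~\ref{thr:mainImprovement} on Step~\ref{alg:improvedAlgorithm2:findDefectiveSet}: the output $V$ of $\mathrm{Decoding}_2(\bY,\cM)$ satisfies $|V\setminus S|\leq w$ and $|S\setminus V|\leq g$, so in particular $V$ contains all but at most $g$ of the true defectives and $|V|\leq d+w$. This is the key structural gain: once we restrict the search to $V$, all subsequent combinatorial enumeration is over a set of size $O(d+w)$ rather than $n$.

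Next I would verify that the restricted hypergraph $\mathbb{H}=(V,\sfF)$ built in Step~\ref{alg:improvedAlgorithm2:findFamily} enjoys the same two properties that drive Chen--Fu's original analysis, namely: (i) every edge $X\in\sfF$ satisfies $|X\cap S|\geq \ell+1$, and (ii) every $u$-subset of $S\cap V$ is an edge of $\sfF$. Both facts follow exactly as in Lemma~4.1 of~\cite{chen2009nonadaptive} because $\cM$ is $(n,d-\ell,u;z]$-disjunct and the tests being examined are the original, unaltered tests; the only change is that the vertex set of $\mathbb{H}$ is $V$ rather than $[n]$, which is harmless since property~(ii) only requires $S\cap V\subseteq V$. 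Consequently Step~\ref{alg:improvedAlgorithm2:getDefectiveSet} is precisely Chen--Fu's extension procedure executed in the universe $V$ with effective defective set $S\cap V$, and its correctness guarantee carries over verbatim: the returned $S^\prime\subseteq V$ satisfies $|S^\prime\setminus(S\cap V)|\leq g$ and $|(S\cap V)\setminus S^\prime|\leq g$.

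To deduce the two claimed bounds on $S^\prime$, I would argue as follows. Because $S^\prime\subseteq V$, we have $S^\prime\setminus S = S^\prime\setminus(S\cap V)$, which yields $|S^\prime\setminus S|\leq g$. For the false negatives, decompose
\begin{equation}
S\setminus S^\prime = \bigl((S\cap V)\setminus S^\prime\bigr)\,\cup\,(S\setminus V),
\end{equation}
and since both pieces have size at most $g$, we obtain $|S\setminus S^\prime|\leq 2g$. Note that the bound $|S\setminus V|\leq g$ is tight in the worst case, which is precisely the reason $2g$ (rather than $g$) is the best one can hope for with this hybrid approach.

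Finally, for the decoding complexity I would sum the costs of the three steps. Step~\ref{alg:improvedAlgorithm2:findDefectiveSet} costs $O\!\left(h(n,d-\ell,u;z]\cdot u\binom{n}{u}\right)$ by Theorem~\ref{thr:mainImprovement}. Step~\ref{alg:improvedAlgorithm2:findFamily} enumerates all $u$-subsets of $V$, costing $O\!\left(h(n,d-\ell,u;z]\cdot u\binom{|V|}{u}\right)=O\!\left(h(n,d-\ell,u;z]\cdot u\binom{w+d}{u}\right)$, which is dominated by the Step~\ref{alg:improvedAlgorithm2:findDefectiveSet} cost. Step~\ref{alg:improvedAlgorithm2:getDefectiveSet} is Chen--Fu's extension run on a universe of size $|V|\leq w+d$, so by the accurate accounting in Theorem~\ref{thr:corrected} it costs $O\!\left(h(n,d-\ell,u;z]\cdot u(d-u)\binom{w+d-u}{g+1}\binom{d-1}{g}\binom{d}{u}\right)$; notably, the binomial $\binom{n-u}{g+1}$ in Theorem~\ref{thr:corrected} is replaced by $\binom{w+d-u}{g+1}$ because the enumeration of the $(g{+}1)$-subsets $A_i$ now takes place in $V\setminus S_i$ rather than in $[n]\setminus S_i$. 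Summing the Step~\ref{alg:improvedAlgorithm2:findDefectiveSet} and Step~\ref{alg:improvedAlgorithm2:getDefectiveSet} contributions produces exactly the bound in~\eqref{eq:improved2}. The main obstacle I anticipate is the bookkeeping at the junction between the two stages, namely making it precise that the Chen--Fu analysis of Step~\ref{alg:improvedAlgorithm2:getDefectiveSet} still goes through when the ``ground-truth'' defective set is replaced by $S\cap V$ while the tests (and hence the membership oracle $t_0^{\cM}(\cdot)\leq e$) are evaluated with respect to the original $S$; the disjunctness of $\cM$ is what reconciles these two viewpoints.
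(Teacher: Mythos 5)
Your proposal is correct and follows essentially the same route as the paper: invoke Theorem~\ref{thr:mainImprovement} for Step~\ref{alg:improvedAlgorithm2:findDefectiveSet}, run the Chen--Fu extension on the reduced universe $V$ with effective defective set $P=S\cap V$ (the paper proves $|P\setminus S^\prime|\leq g$ by the same contradiction argument you delegate to Chen--Fu), and combine $|S\setminus V|\leq g$ with $|P\setminus S^\prime|\leq g$ to get the $2g$ bound, with the identical step-by-step cost accounting. The only cosmetic difference is that the paper re-derives the extension-step guarantees explicitly (including the $|S_m|\geq d$ termination case) rather than citing them as carrying over verbatim.
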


As with the previous one, the proof is divided into two parts: correctness and decoding complexity.

\subsubsection{Correctness}

From Theorem~\ref{thr:mainImprovement}, we get $|V \setminus S| \leq \left( \left\lfloor \frac{|S|}{\ell + 1} \right\rfloor + u - 1 \right) g$ and $|S \setminus V| \leq g$. Set $P = V \cap S$. We always have $|P| \geq |S| - g$ because $|S \setminus V| \leq g$.

Using the same argument as in the first paragraph of Section~\ref{sub:main:improvedEncDec:correct}, for any $u$-subset $X \in \sfF$, we get $|X \cap S| \geq \ell + 1$ and every $u$-subset $X^+ \subseteq P$ must be in $\sfF$. Because $V(\mathbb{H})$ is $u$-complete with respect to $S^\prime = S_m$, we attain $|S^\prime \setminus S| \leq g$.

We now show that $|S \setminus S^\prime| \leq 2g$ once $S^\prime = S_m$ is not extendable or $|S_m| \geq d$. Consider the case $|S^\prime| \geq d$. Since $|S \setminus S^\prime| \leq g$, we get $|S^\prime \cap S| \geq d - g$. This indicates that $|S \setminus S^\prime| \leq g \leq 2g$ because $|S| \leq d$.

It is now adequate to show that if $S^\prime$ is not extendable, then $|S \setminus S^\prime| \leq 2g$. To prove this property, it suffices to prove $|P \setminus S^\prime| \leq g$. The property is then straightforwardly attained because $P \subseteq S$ and $|P| \geq |S| - g$. Assume for the sake of contradiction that $|P \setminus S^\prime| > g$. Set $A_m \subseteq P \setminus S^\prime$ and $|A_m| = g + 1$, and let $B_m$ be any subset with $S^\prime \setminus P \subseteq B_m \subset S^\prime$ and $|B_m| = g$. Subset $B_m$ always exists because $|S^\prime \setminus S| \leq g$ and the initial $S^\prime$ has $u > g$ elements. Therefore, $(S^\prime \cup A_m) \setminus B_m$ is contained in $P$. It follows that $\mathbb{H}$ is $u$-complete with respect to $(S^\prime \cup A_m) \setminus B_m$. This contradicts the assumption that $S^\prime$ is not extendable.

In summary, $|S \setminus S^\prime| \leq 2g$ and $|S^\prime \setminus S| \leq g$ are always attained after running Algorithm~\ref{alg:improvedAlgorithm2}.

\subsubsection{Complexity}

From Theorem~\ref{thr:mainImprovement}, the complexity of Step~\ref{alg:improvedAlgorithm2:findDefectiveSet} is $h(n, d - \ell, u; z] \cdot u\binom{n}{u}$.

Because $|V| \leq \left( \left\lfloor \frac{|S|}{\ell + 1} \right\rfloor + u - 1 \right) g + d = w + d$, the complexity of Step~\ref{alg:improvedAlgorithm2:findFamily} is $u h(n, d - \ell, u; z] \times \binom{|V|}{u} \leq u h(n, d - \ell, u; z] \times \binom{w + d}{u}$.

We can verify whether ``$\mathbb{H}$ is $u$-complete with respect to $(S_i \cup A_i) \setminus B_i$'' if $t_0^{\cM}(Z) \leq e$ for every $u$-subset $Z \subseteq V$. Using an argument similar to the one described in the second paragraph of Appendix, we get that the complexity of Step~\ref{alg:improvedAlgorithm2:getDefectiveSet} is $(d - u) \binom{w + d - u}{g + 1} \binom{d - 1}{g} \binom{d}{u} \times u h(n, d - \ell, u; z]$. The total complexity of Algorithm~\ref{alg:improvedAlgorithm2} is then at most
\begin{align}
& h(n, d - \ell, u; z] \cdot u\binom{n}{u} + u h(n, d - \ell, u; z] \times \binom{w + d}{u} \nonumber \\
&+ (d - u) \binom{w + d - u}{g + 1} \binom{d - 1}{g} \binom{d}{u} \times u h(n, d - \ell, u; z] \nonumber \\
&= h(n, d - \ell, u; z] \times u \left( \binom{n}{u} + (d - u) \binom{w + d - u}{g + 1} \binom{d - 1}{g} \binom{d}{u} \right) \label{eq:finalComplexity} \\
&= h(n, d - \ell, u; z] \times u \left( \binom{n}{u} + (d - u) \binom{ \left( \left\lfloor \frac{|S|}{\ell + 1} \right\rfloor + u - 1 \right) g + d - u}{g + 1} \binom{d - 1}{g} \binom{d}{u} \right). \nonumber
\end{align}

Equation~\eqref{eq:finalComplexity} is attained if we suppose that $w + d = \left( \left\lfloor \frac{|S|}{\ell + 1} \right\rfloor + u - 1 \right) g + d \leq u \left( \frac{d}{\ell + 1} + u - 1 \right) + d \leq n$. This condition is practical because $n$ is much larger than $d$.

\subsubsection{Example for Algorithm~\ref{alg:improvedAlgorithm2}}
\label{subsub:improvedEncDec:eg3}

We demonstrate Algorithm~\ref{alg:improvedAlgorithm2} using the same settings as before: $n = 6, d = 4, \ell = 0, u = 2, g = u - \ell - 1 = 1, z = 1, e = 0$ and $S = \{1, 2, 4, 5 \}$. Input vector $\bX$, $(n = 6, d - \ell = 4, u = 2; z = 1]$-disjunct matrix $\cM$, and outcome vector $\bY$ are as in~\eqref{eqn:exChenFu}. Note that the conditions $\mathrm{e}^2 (d + u)^2/u \leq n$ and $w + d = \left( \left\lfloor \frac{|S|}{\ell + 1} \right\rfloor + u - 1 \right)g + d \leq n$ do not hold though Algorithm~\ref{alg:improvedAlgorithm2} still works well for this example.

Algorithm~\ref{alg:improvedAlgorithm2} proceeds as follows. In Step~\ref{alg:improvedAlgorithm2:findDefectiveSet}, the set of vertices $V = \{1, 2, 3, 5 \}$ is first obtained as described in Section~\ref{subsub:improvedEncDec:eg2}. Our task now is to construct a hypergraph $\mathbb{H} = (V, \sfF)$ using that set. Note that the original set of vertices, $[n] = [6] = \{1, 2, 3, 4, 5, 6 \}$, is here reduced to $V$. Using the same procedure described in Section~\ref{sub:review:eg}, Step~\ref{alg:improvedAlgorithm2:findFamily} searches for all 2-subsets $X \subseteq V$ in order to form a set of edges $\sfF$ such that $t^{\cM}_0(X) \leq e = 0$. From~\eqref{eqn:exChenFu}, we get $\sfF = \{ \{1, 2\}, \{1, 5\}, \{2, 3\}, \{2, 5\}, \{3, 5\} \}$.

Step~\ref{alg:improvedAlgorithm2:getDefectiveSet} starts with an initial $2$-subset $S_1 = \{1, 2\}$ and checks all possible cases to obtain some 2-subset $A_1$ in $V \setminus S_1 = \{ 3, 5\}$, which is $\{3, 5\}$, and a 1-subset $B_1$ in $S_1$, which is some element of $\{ \{1 \}, \{2 \} \}$, such that $\mathbb{H}$ is $2$-complete with respect to $(S_1 \cup A_1) \setminus B_1$. Since $A_1 = \{3, 5\}$ and $B_1 = \{1 \}$ ensure that the condition holds, set $S_2 = (S_1 \cup A_1) \setminus B_1 = \{2, 3, 5 \}$. Next, a 2-subset $A_2 \subseteq V \setminus S_2 = \{1 \}$ and a 1-subset $B_2 \subseteq S_2$ are chosen. Since $|V \setminus S_2| = 1 < 2 = u$, there does not exist such an $A_2$; i.e., $S_2$ is not extendible. Step~\ref{alg:improvedAlgorithm2:getDefectiveSet} thus stops and outputs $S^\prime = S_2 = \{2, 3, 5 \}$.

In this example, approximate defective set $S^\prime$ satisfies the two conditions $|S^\prime \setminus S| \leq g$ and $|S \setminus S^\prime| \leq 2g$ in Theorem~\ref{thr:mainImprovement2} because $|S^\prime \setminus S| = |\{ 3 \}| = 1 = g \leq g$ and $|S \setminus S^\prime| = |\{1, 4 \}| = 2 = 2g \leq 2g$.

\section{Simulation}
\label{sec:simul}


We visualized (upper bounds on) the number of tests for threshold group testing with a gap using five parameters $n, d, u, \ell$, and $z$ using simulation. For each fixed $z$, we derived $\delta$ in Theorem~\ref{thr:improvedChenUpper} accordingly. Since the number of tests with Cheraghchi's scheme and Ahlswede et al.'s scheme is asymptotic while the number of tests with other works is exact, we consider only the other works, which are our proposed theorems, Chen et al.'s scheme, and Chen and Fu's scheme.

Since the number of test with Chen and Fu's scheme is equal to the one with Chen et al.'s scheme, we only consider Chen et al.'s scheme here. Similarly, since the numbers of tests with the four proposed theorems (Theorems~\ref{thr:improvedChenUpper},~\ref{thr:improved0},~\ref{thr:mainImprovement},~\ref{thr:mainImprovement2}) are identical, we only consider the number of tests in Theorem~\ref{thr:improvedChenUpper}. The two schemes are visualized in Figures~\ref{fig:3Schemes}--\ref{fig:2Schemes}. The red and green lines represent for Theorem~\ref{thr:improvedChenUpper} and Chen et al.'s scheme, respectively.

Since Chan et al.~\cite{chan2013stochastic} and Reisizadeh et al.~\cite{reisizadeh2018sub} used a model for the test outcome when the number of defectives in a test fell between $\ell$ and $u$, we do not show the number of tests for their work here. The numbers of tests for Theorem~\ref{thr:improvedChenUpper} and Chen et al.'s scheme are plotted in the figures as $\log_{10}{t}$ versus $\log_{10}{n}$ for various settings of $n, d, u, \ell$, and $z$, where $t$ is the number of tests.

Parameter $z$ was set to $\{3, 11, 101 \}$ corresponding to error tolerance $e = \{1, 5, 50 \}$. The number of items $n$ and the maximum number of defectives $d$ were respectively set to $\{ 10^6 = 1 \mathrm{M}, 10^8 = 10\mathrm{M}, 10^9 = 1\mathrm{B}, 10^{10} = 10\mathrm{B}, 10^{11} = 100\mathrm{B} \}$ and $\{20, 100, 1000 \}$. Finally, upper threshold $u$ and lower threshold $\ell$ were respectively set to $0.2 d$ and $0.5u = 0.1d$.

As shown in Fig.~\ref{fig:3Schemes} for $d = 20$ and Fig.~\ref{fig:2Schemes} for $d = 100$ and $d = 1000$, the number of tests with Theorem~\ref{thr:improvedChenUpper} was the smallest for all settings compared to Chen et al.'s scheme. More importantly, the number was smaller than the number of items (except for $n = 10^6$) while those with the other schemes were mostly larger than the number of items.

When $d = 20$ (Fig.~\ref{fig:3Schemes}), for a small $z$, the number of tests with Chen et al.'s scheme was relatively close to ours. However, as $z$ increased, the number of tests with Chen et al.'s scheme quickly diverged from that with Theorem~\ref{thr:improvedChenUpper}.

\begin{figure}[t]
\centering

\includegraphics[width=1.0\linewidth]{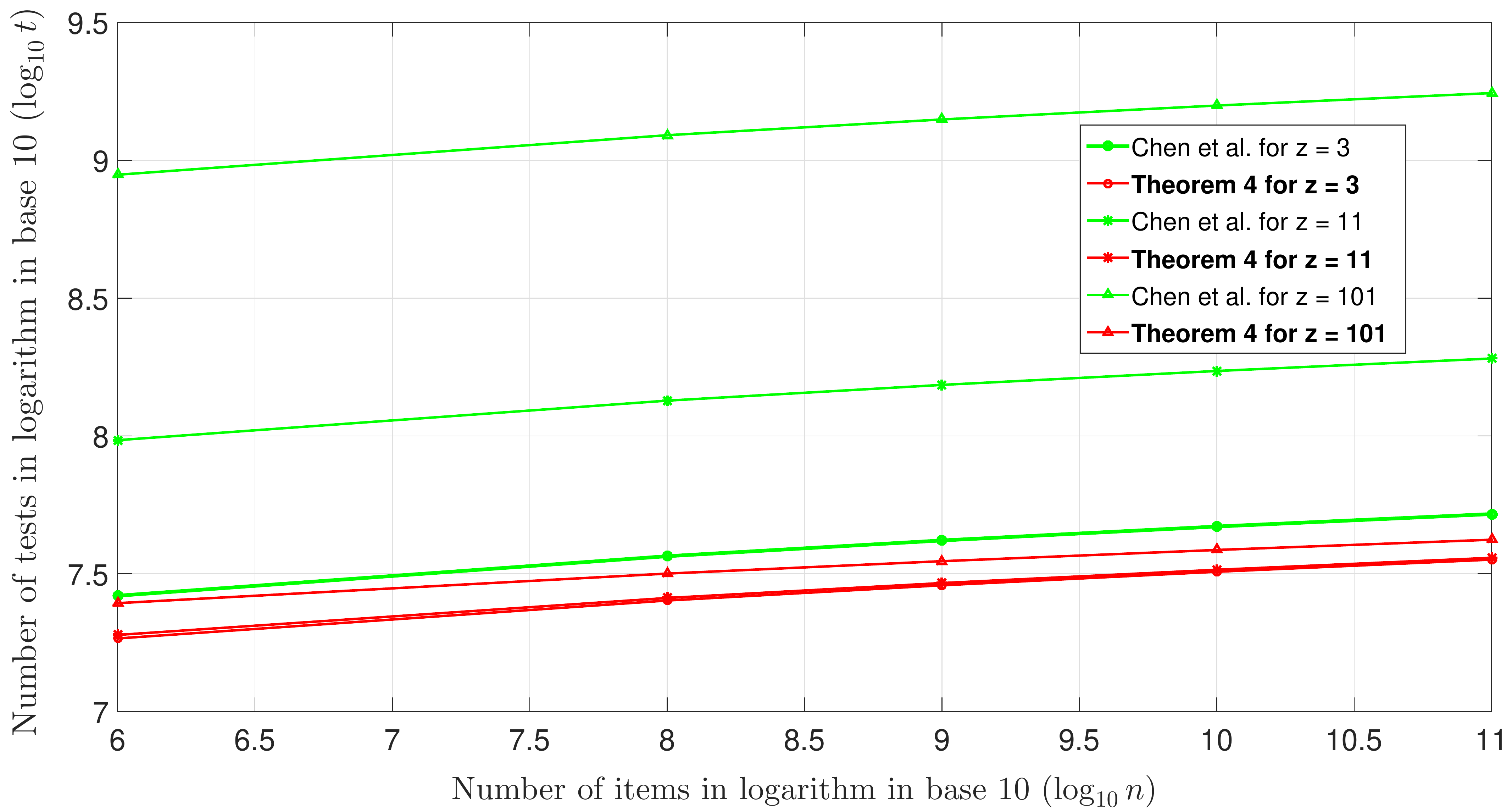}

\caption{Upper bounds on the number of tests versus number of items in logarithm in base 10 for $d = 20$, $z = \{3, 11, 101 \}$, and $n = \{ 10^6 = 1 \mathrm{M}, 10^8 = 10\mathrm{M}, 10^9 = 1\mathrm{B}, 10^{10} = 10\mathrm{B}, 10^{11} = 100\mathrm{B} \}$ for Chen et al.'s scheme and Theorem~\ref{thr:improvedChenUpper}.}
\label{fig:3Schemes}
\end{figure}

\begin{figure}[t]
\begin{subfigure}{0.5\textwidth}
\centering
  \includegraphics[width=1.0\linewidth]{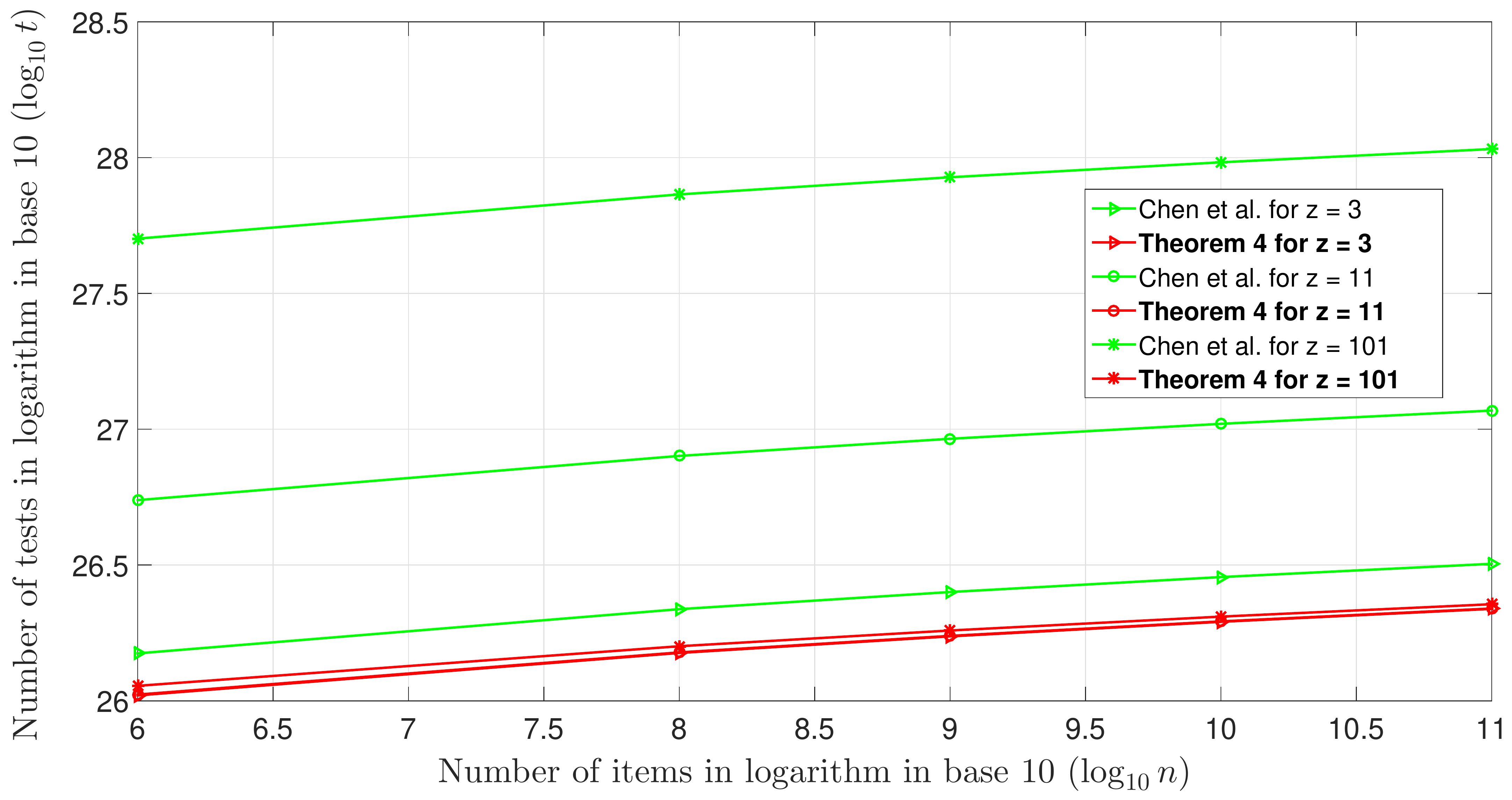}
  \caption{$d = 100$.}
  \label{fig:d_100}
\end{subfigure}
\begin{subfigure}{0.5\textwidth}
  \centering
  \includegraphics[width=1.0\linewidth]{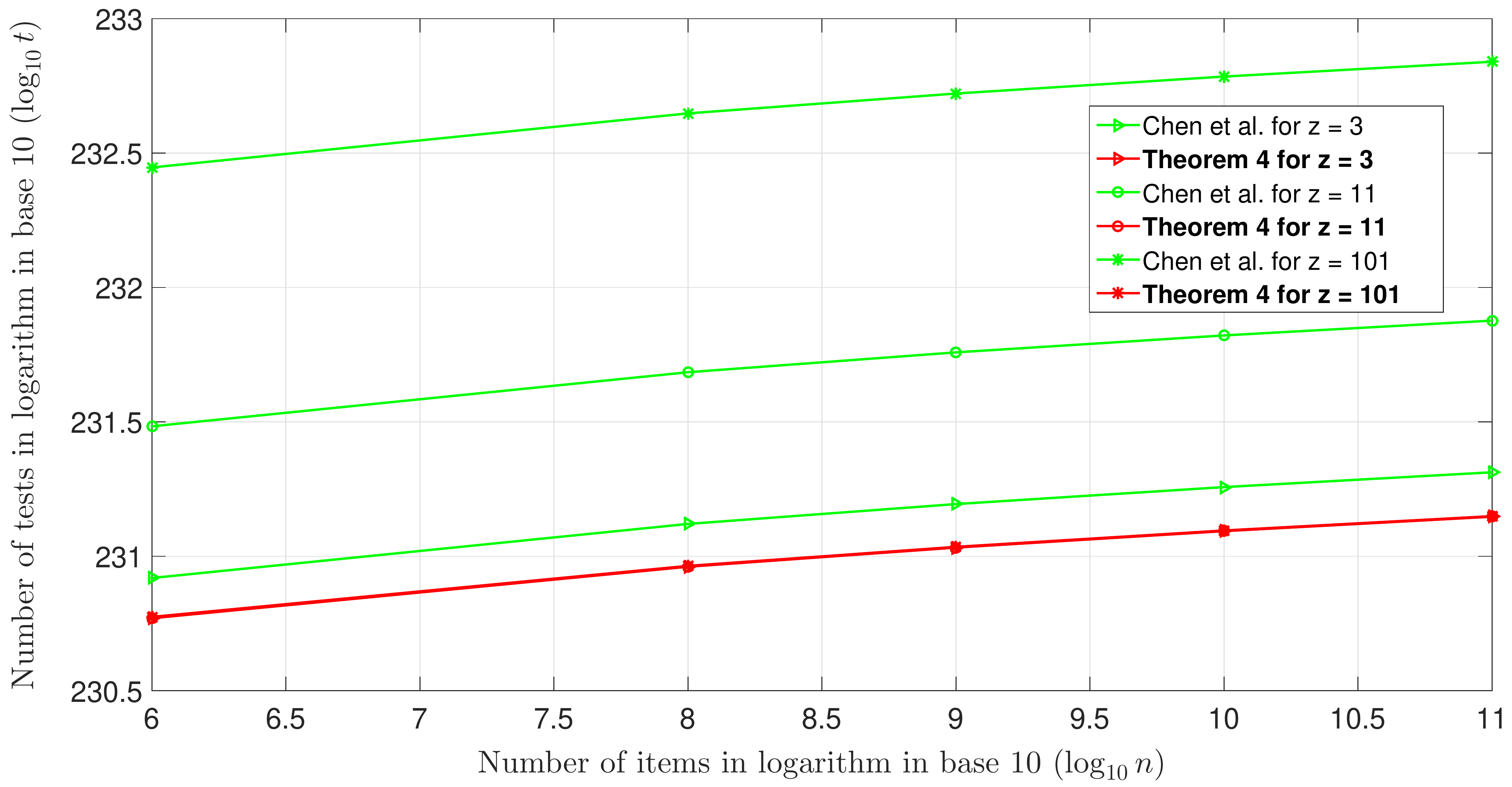}
  \caption{$d = 1000$.}
  \label{fig:d_1000}
\end{subfigure}

\caption{Upper bounds on the number of tests versus number of items in logarithm in base 10 for $d = \{100, 1000\}$, $z = \{3, 11, 101 \}$, and $n = \{ 10^6 = 1 \mathrm{M}, 10^8 = 10\mathrm{M}, 10^9 = 1\mathrm{B}, 10^{10} = 10\mathrm{B}, 10^{11} = 100\mathrm{B} \}$ for Chen et al.'s scheme and Theorem~\ref{thr:improvedChenUpper}.}
\label{fig:2Schemes}
\end{figure}

In summary, the results of simulation match those of our analysis in Section~\ref{main:sub:improvedDisjunct:first}: the upper bound on the number of tests in Theorem~\ref{thr:improvedChenUpper} is always smaller than the one in Theorem~\ref{thr:ChenUpper} for any positive $z$.

\section{Conclusion}
\label{sec:cls}

In this paper, we have presented a novel construction scheme for disjunct matrices that is better than the construction proposed by Chen et al.~\cite{chen2008upper}. For threshold group testing, Cheraghchi gave a hint that the number of tests can be asymptotically to $O(d^{2 + g} \log(n/d) \cdot c_u)$, which is essentially optimal, where $c_u = (8u)^u$. Therefore, it is an interesting question that whether we can reduce the magnitude of the constant $c_u$ and have a decoding algorithm associated with that number of tests.

We next presented a more accurate theorem for Chen and Fu's scheme~\cite{chen2009nonadaptive}, three proposed algorithms on improving non-adaptive encoding and decoding algorithms for threshold group testing as well as simulation for verifying our arguments throughout this work.

\appendix

We use the full expression for~\eqref{eq:ChenUpper} instead of removing $(d - u) \binom{n - u}{g + 1} \binom{d}{g} \binom{d}{u}$ as done by Chen and Fu~\cite{chen2009nonadaptive}. Their inaccurate analysis in the complexity of Step~\ref{alg:getDefectiveSet} led to \textit{inaccurate} decoding complexity in Algorithm~\ref{alg:decodingThreshold}. They presumed that $(d - u) \binom{n - u}{g + 1} \binom{d}{g} \binom{d}{u}$ can be reduced to $O(n^{g + 1})$, and therefore is smaller than $\binom{n}{u} = O(n^u)$.

We first analyze the complexity of Step~\ref{alg:getDefectiveSet}. Let $\alpha$ be the cardinality of $S_i$. We always have $u \leq |S_i| \leq d - 1$ for $i < m$. The time costs of finding all possible subsets $A_i$ and $B_i$ are $\binom{n - \alpha}{g + 1}$ and $\binom{\alpha}{g}$, respectively. One can verify whether ``$\mathbb{H}$ is $u$-complete with respect to $(S_i \cup A_i) \setminus B_i$'' if $t_0^{\cM}(Z) \leq e$ for every $u$-subset $Z \subseteq V$. The complexity of the verification is $\binom{\alpha + 1}{u} \times u \times t(n, d - \ell, u; z]$. Chen and Fu claimed that this cost is $\binom{\alpha + 1}{u} \leq \binom{d}{u}$, which is simply equivalent to the complexity of counting all possibilities of $u$-subsets in $(S_i \cup A_i) \setminus B_i$. This claim is \textit{inaccurate}. Since Step~\ref{alg:getDefectiveSet} is repeated up to $d - u$ times, the complexity of executing this step is
\begin{align}
&(d - u) \binom{n - \alpha}{g + 1} \binom{\alpha}{g} \binom{\alpha + 1}{u} u \times t(n, d - \ell, u; z] \nonumber \\
&= O \left( u(d-u) \binom{n - u}{g + 1} \binom{d - 1}{g} \binom{d}{u} t(n, d - \ell, u; z] \right). \nonumber
\end{align}

We next prove that the quantity $(d-u) \binom{n - u}{g + 1} \binom{d - 1}{g} \binom{d}{u}$ in~\eqref{eq:ChenUpper} should not be removed because it is not always smaller than $\binom{n}{u}$. Let us consider the case in which $u \geq 2$, $d = 2u$, and $u = g + 1$, i.e., $\ell = 0$. We have:
\begin{align}
&(d-u) \binom{n - u}{g + 1} \binom{d - 1}{g} \binom{d}{u} \nonumber \\
&= u \binom{n - u}{u} \binom{2u - 1}{u - 1} \binom{2u}{u} \nonumber \\
&= u \cdot \frac{(n - u)(n - u - 1) \ldots (n - u - (u - 1))}{u!} \cdot \frac{u}{2u(2u - u + 2)} \binom{2u}{u} \cdot \binom{2u}{u} \nonumber \\
&> \frac{(n - 2u + 1)^u}{u!} \cdot \frac{u}{2(u + 2)} \binom{2u}{u}^2 \nonumber \\
&> \frac{(n - 2u + 1)^u}{u!} \cdot \frac{u}{2(u + 2)} \left( \frac{1.08444}{2 \mathrm{e}^{1/(8u)} \sqrt{u}} \cdot 2^{2u} \right)^2, \label{eqn:approxBinom} \\
&> \frac{(n - 2u + 1)^u}{u!} \cdot \frac{1}{2(u + 2)} \cdot \left( \frac{1.08444}{2 \mathrm{e}^{1/(8 \times 2)}} \right)^2 \cdot 16^u \nonumber \\
&> \frac{(n - 2u + 1)^u}{u!} \cdot \frac{1}{7(u + 2)} \cdot 16^u, \nonumber 
\end{align}
and
\begin{equation}
\binom{n}{u} = \frac{n (n - 1) \ldots (n - (u - 1))}{u!} < \frac{n^u}{u!}, \nonumber
\end{equation}
where~\eqref{eqn:approxBinom} is attained by using the inequality $\binom{mu}{u} > 1.08444 \mathrm{e}^{-1/(8u)} u^{-1/2} \frac{m^{m(u - 1) + 1}}{(m - 1)^{(m - 1)(u - 1)}}$ for integers $m > 1$ and $u \geq 2$ (Corollary 2.9 in~\cite{stanica2001good}). Consider the following inequality:
\begin{align}
&& \frac{(n - 2u + 1)^u}{u!} \cdot \frac{1}{7(u + 2)} \cdot 16^u &\geq \frac{n^u}{u!}  \nonumber \\
&\Longleftrightarrow& 1 - \frac{1}{16} \cdot \left( 7(u + 2) \right)^{1/u} &\geq \frac{2u - 1}{n}. \label{eq:computeN}
\end{align}

Since $\left( 7(u + 2) \right)^{1/u}$ is a decreasing function of $u$ and $u \geq 2$, for~\eqref{eq:computeN} to hold, it suffices that
\begin{align}
&& 1 - \frac{1}{16} \cdot \left( 7(u + 2) \right)^{1/u} \geq 1 - \frac{\sqrt{28}}{16} &\geq \frac{2u - 1}{n} \nonumber \\
&\Longleftrightarrow& n &\geq \frac{8 (2u - 1)}{8 - \sqrt{7}}. \nonumber
\end{align}

Therefore, when $d = 2u, u = g + 1 \geq 2$, and $n \geq \frac{8 (2u - 1)}{8 - \sqrt{7}}$, we always have the following inequality
\begin{align}
(d-u) \binom{n - u}{g + 1} \binom{d - 1}{g} \binom{d}{u} &> \frac{(n - 2u + 1)^u}{u!} \cdot \frac{1}{7(u + 2)} \cdot 16^u \nonumber \\
&\geq \frac{n^u}{u!} > \binom{n}{u}. \nonumber
\end{align}

In summary, the complexity in~\eqref{eq:ChenUpper} is inaccurate.

\section*{Acknowledgment}
\label{sec:ack}

The authors thank Jonathan Scarlett at the National University of Singapore for his constructive and insightful comments on an early version of this paper. We also thank Roghayyeh Haghvirdinezhad for her comments on Fig.~\ref{fig:Overview}. The authors would like to thank the anonymous reviewers for their invaluable comments on an earlier draft of this work.

\bibliographystyle{ieeetr}
\balance
\bibliography{bibli}

\end{document}